\let\wfs@comment@comment\comment
\let\comment\@undefined
\let\wfs@changes@comment\comment
\let\comment\@undefined
\newcommand\comment{%
    \ifthenelse{\equal{\@currenvir}{comment}}
    {\wfs@comment@comment}
    {\wfs@changes@comment}%
}
\newcommand{\PAut}{\mathrm{PAut}}
\newcommand{\F}{\mathbb{F}}
\newcommand{\ZZ}{{\rm Z}}
\newcommand{\B}{{\rm B}}
\newcommand{\HH}{{\rm H}}
\newcommand{\kernel}{\mbox{\rm Ker \,}}
\newcommand{\ho}{\mbox{\rm Hom}}
\newcommand{\Alt}{\mbox{\rm A}}
\newcommand{\Aut}{\mbox{\rm Aut}}
\newcommand{\cha}{\mbox{\rm char\,}}
\DeclareMathOperator{\rank}{rank}
\newcommand{\wt}{\mbox{\rm wt}}
\newcommand{\gene}{\mbox{\rm gen}}
\newcommand{\supp}{\mbox{\rm supp}}
\newcommand{\dist}{\mbox{\rm d}}
\newcommand{\ord}{\mbox{\rm ord}}
\newcommand{\ann}{\mbox{\rm Ann}}
\theoremstyle{plain}
\newtheorem{theo}{Theorem}[section]
\newtheorem{lem}[theo]{Lemma}
\newtheorem{coro}[theo]{Corollary}
\newtheorem{prop}[theo]{Proposition}}
\theoremstyle{definition}
\newtheorem{rk}[theo]{Remark}
\newtheorem{defi}[theo]{Definition}
\newtheorem{ex}[theo]{Example}
\newtheorem{question}[theo]{Question}}
\title{Twisted skew G-codes}
\author{Angelot Behajaina, Martino Borello, Javier de la Cruz, Wolfgang Willems}
\date{}
\author[1]{Angelot Behajaina}
\affil[1]{Department of Mathematics, Technion - Israel Institute of Technology, Haifa, Israel}
\author[2]{Martino Borello}
\affil[2]{Universit\'e Paris 8, Laboratoire de G\'eom\'etrie, Analyse et Applications, LAGA,
Universit\'e Sorbonne Paris Nord, CNRS, UMR 7539, France}
\author[3]{Javier de la Cruz}
\affil[3]{Department of Mathematics, Universidad del Norte, Barranquilla, Colombia}
\author[3,4]{Wolfgang Willems}
\affil[4]{Otto-von-Guericke-Universit\"at, Magdeburg, Germany}
\begin{document}

\maketitle

\noindent

\begin{abstract} In this paper we investigate left ideals as codes in twisted skew group rings. The considered rings, which are often  algebras over a finite field, allows us to detect many of the well-known codes. The presentation, given here, unifies the concept of group codes, twisted group codes and skew group codes.
\end{abstract}
{\bf Keywords.} Twisted group algebra, skew group ring, twisted skew group ring\\
{\bf MSC classification.} 94B05, 20C05


\section*{Introduction}

Linear codes are fundamental objects in classical coding theory and they are simply subspaces of a a vector space endowed with the Hamming metric. However, the linear structure is usually not enough to get good and useful codes. This is the main reason why more algebraic structure is added since the early days of the theory \cite{Pr,Ber,MacW}. Many interesting codes have found an algebraic realization as group codes \cite{W21},
twisted group codes \cite{CW21} or skew group codes \cite{BGU07}, where the ambient spaces are group algebras,
twisted group algebras, or skew group rings, respectively. For example, the binary extended self-dual $[24,12,8]$ Golay code is a group code,
the ternary extended self-dual $[12,6,6]$ Golay code is a twisted group code and skew cyclic codes are skew group codes.
It turns out that the famous quaternary Hermitian self-dual $[6,3,4]$ hexacode ${\cal H}_6$ is also a skew
group code. Moreover, group codes have been shown to be asymptotically good \cite{BM,BW}.

In this note we define and analyze a unified ambient space
$K[G,\Theta,\alpha]$ 
in which we obtain all classes of the above codes as left ideals via specializations of the parameters
$\Theta$ and $\alpha$. Here $K$ is a finite field, $G$ a finite group, $\Theta \in \ho(G,\Aut(K))$ 
and $\alpha$ a $2$-cocyle of $G$ stabilized by $\Theta(G)$. The object $K[G,\Theta, \alpha]$ is a ring, in general not a $K$-algebra, and called a twisted skew group ring. Specializing the parameters $\Theta$ and $\alpha$ we may obtain a group algebra via $KG=K[G,1,1]$, a twisted group algebra via $K^\alpha G = K[G,1,\alpha]$ or a skew group ring via $K{\rtimes_\Theta}G=K[G,\Theta,1]$. Thus the advantage of the concept are unified proofs for coding theoretical properties of group codes, twisted group codes and skew group codes. \\

The paper is organized as follows. In the first section we define twisted skew group rings  $R=K[G,\Theta,\alpha]$
and investigate the structure of $R$.
In particular, we prove that $R$ is a Frobenius ring. Moreover, $R$ is semisimple if the characteristic of the underlying field does not divide the order of $G$. Section 2 is devoted to left ideals $C$ of $R$ which we call twisted skew group codes. We show that for cyclic groups we get nothing else then the skew $\lambda$-constacyclic codes. Furthermore, we extend a known result on group codes in which the minimum distance of $C$ is bounded from below in terms of $|G|$ and $\dim C$.
In section 3, we answer, in terms of automorphisms, the question when a given linear code is a twisted skew group code.
Section 4 deals with duality questions and section 5 with  codes which are generated by idempotents. In the last section we collect some twisted skew group codes which are optimal according to Grassl's list. In particular we realize the quaternary hexacode as a skew group code.

\section{Twisted skew group rings}

Let $G$ be a finite group, let $K = \F_q$ be a finite field of characteristic $p$ and let $\Theta \in \ho(G,\Aut(K))$.
Furthermore $\alpha$ always denotes a {\it normalized $2$-cocyle} (to be brief a cocyle) of $G$,  i.e.,\\
$\alpha: G \times G \rightarrow K^*=K\setminus \{0\}$ with $\alpha(g,1)=\alpha(1,g) = \alpha(1,1)=1$ for all $g \in G$ satisfying
$$ \alpha(g_1,g_2g_3)\alpha(g_2,g_3)=\alpha (g_1g_2,g_3)\alpha(g_1,g_2)$$
for all $g_i \in G$. We denote the set of all $2$-cocycles for $G$  by $\ZZ^2(G,K^*)$. A $2$-cocycle $\alpha$ is called a {\it coboundary} if there exists $\kappa:G \longrightarrow K^*$ with $\kappa(1)=1$ such that
$$ \alpha(g,h) =  \kappa(g)^{-1} \kappa(h)^{-1} \kappa(gh)$$
for all $g,h\in G$. Let $\textrm{B}^2(G,K^*)$ denote the set of all coboundaries for $G$.

\begin{rk} \label{cohomology} The set $\textrm{B}^2(G,K^*)$ of coboundaries forms a subgroup of the group $\textrm{Z}^2(G,K^*)$
and the factor group $\textrm{H}^2(G,K^*) = \textrm{Z}^2(G,K^*)/\B^2(G,K^*)$ is usually called the {\it second cohomology group of $G$} with values in $K^*$.
 For $\alpha \in \ZZ^2(G,K^*)$ we denote its image in $\HH^2(G,K^*)$ by
$[\alpha]$. Note that $[\alpha]=[\beta]$ if and only if $\alpha(g,h) =\beta(g,h)\kappa(g)\kappa(h)\kappa(gh)^{-1}$ where $\kappa(1)=1$.

\end{rk}
In the sequel, we will always consider cocycles
$\alpha$, which are stabilized by the group 
$\Theta(G) \leq \Aut(K)$, i.e.,
\begin{equation} \label{eq1} \Theta(g)\alpha(x,y) = \alpha(x,y) \end{equation}
for all $g,x,y \in  G$.

\begin{defi} Let
$R:=K[G,\Theta,\alpha]$ be the (left) $K$-vector space  $KG$. On $R$, where the group elements are written with an overline,
 we define a multiplication by

\begin{equation}\label{eq:proddef}
\sum_{g \in G} a_g \overline{g} \cdot \sum_{h \in H} b_h \overline{h}=\sum_{g,h \in G}  a_g(\Theta(g)b_h)\alpha(g,h)\overline{gh}
\end{equation}
for $a_g, b_h \in K$.

\end{defi}

\begin{theo} $R$ is an associative ring with identity $\overline{1}$, called a {\rm twisted skew group ring}.
\end{theo}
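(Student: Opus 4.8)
The plan is to verify directly the ring axioms for the multiplication defined in \eqref{eq:proddef}, using bilinearity to reduce everything to the basis $\{\overline{g} : g \in G\}$. On basis elements the product reads $\overline{g}\cdot\overline{h} = \alpha(g,h)\,\overline{gh}$ (here $a_g = b_h = 1$, and $\Theta(g)$ applied to $1 \in K$ is $1$), and the general formula \eqref{eq:proddef} is obtained by extending $K$-linearly in the second factor and ``twisted-linearly'' in the first; I would first record that \eqref{eq:proddef} is indeed additive in each argument, so it suffices to check associativity and the identity law on the spanning set $\{a\overline{g} : a \in K, g \in G\}$, and in fact — after pulling the scalars out carefully — on the pure group elements $\overline{g}$.

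First I would treat the identity. Since $\alpha$ is normalized, $\overline{1}\cdot\overline{g} = \alpha(1,g)\overline{g} = \overline{g}$ and $\overline{g}\cdot\overline{1} = \alpha(g,1)\overline{g} = \overline{g}$; extending linearly and using $\Theta(1) = \mathrm{id}_K$ gives $\overline{1} = \sum_{g} \delta_{g,1}\overline{g}$ as a two-sided identity. Next, associativity: for three basis elements I would compute
\[
(\overline{g}\cdot\overline{h})\cdot\overline{k} = \alpha(g,h)\,\overline{gh}\cdot\overline{k} = \alpha(g,h)\,\alpha(gh,k)\,\overline{ghk},
\]
\[
\overline{g}\cdot(\overline{h}\cdot\overline{k}) = \overline{g}\cdot\bigl(\alpha(h,k)\,\overline{hk}\bigr) = \bigl(\Theta(g)\alpha(h,k)\bigr)\,\alpha(g,hk)\,\overline{ghk},
\]
and here is where the two hypotheses on $\alpha$ enter: the stabilization condition \eqref{eq1} gives $\Theta(g)\alpha(h,k) = \alpha(h,k)$, whereupon equality of the two expressions is exactly the $2$-cocycle identity $\alpha(g,h)\alpha(gh,k) = \alpha(h,k)\alpha(g,hk)$. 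For the general case with scalar coefficients $a_g\overline{g}$, $b_h\overline{h}$, $c_k\overline{k}$, I would expand both sides and track the scalar factors: one obtains $a_g\,\Theta(g)(b_h)\,\Theta(gh)(c_k)$ on one side and $a_g\,\Theta(g)\bigl(b_h\,\Theta(h)(c_k)\bigr)$ on the other, and these agree because $\Theta \in \ho(G,\Aut(K))$ forces $\Theta(gh) = \Theta(g)\Theta(h)$ and each $\Theta(g)$ is a ring automorphism of $K$ (hence multiplicative); the cocycle scalars then match as in the basis computation. Finally I would invoke bilinearity once more to conclude associativity for arbitrary elements of $R$ by summing over $g,h,k$.

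The only genuinely delicate point is the interaction between the $\Theta$-twist and the cocycle scalars inside the triple product — that is, making sure that when $\Theta(g)$ is pushed past $\alpha(h,k)$ it acts trivially (by \eqref{eq1}) while when it is pushed past the coefficient $b_h$ and the factor $\Theta(h)(c_k)$ it must be applied honestly and distributed using multiplicativity of $\Theta(g)$ and the homomorphism property of $\Theta$. I expect this bookkeeping of ``which scalars $\Theta$ sees and which it fixes'' to be the main obstacle, though it is entirely routine; everything else (distributivity, the identity, $K$-bilinearity of the ambient vector space structure) is immediate from the definition.
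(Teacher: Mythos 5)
Your proposal is correct and follows essentially the same route as the paper: both verify associativity by direct expansion, with the stabilization condition \eqref{eq1} letting $\Theta(g)$ pass over $\alpha(h,k)$ and the $2$-cocycle identity closing the computation, and both get the identity law from normalization of $\alpha$. The only difference is organizational — you reduce to basis elements and then reinstate scalars, while the paper expands general sums in one pass — and this changes nothing of substance.
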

\begin{proof}
To prove that $R$ is an associative ring, the only point that needs to be checked is the associativity, 
since the others are straightforward. To this end,
 let $a=\sum_{g}a_g\overline{g} \in R$, $b=\sum_{h}b_h\overline{h} \in R$ and $c=\sum_{u}c_u\overline{u} \in R$. On one hand side we have
\begin{align*}
(a\cdot b)\cdot c&=\left( \sum_{g,h}a_g(\Theta(g)b_h)\alpha(g,h) \overline{gh} \right) \left( \sum_u c_u \overline{u}\right)\\
&=\sum_{g,h,u} a_g(\Theta(g)b_h)\alpha(g,h)(\Theta(gh)c_u)\alpha(gh,u) \overline{(gh)u},
\end{align*}
on the other hand (using that $\Theta(g)$ acts trivially on $\alpha$ \eqref{eq1})
\begin{align*}
a\cdot (b \cdot c)&= \left( \sum_g a_g \overline{g}\right)
\left( \sum_{h,u}b_h(\Theta(h)c_u)\alpha(h,u) \overline{hu} \right) \\
&= \sum_{g,h,u} a_g\Theta(g)[b_h(\Theta(h)c_u)\alpha(h,u)]\alpha(g,hu) \overline{g(hu)}\\
&= \sum_{g,h,u} a_g(\Theta(g)b_h)(\Theta(gh)c_u)\alpha(h,u)\alpha(g,hu) \overline{g(hu)},
\end{align*}
from which  the associativity follows since $\alpha$ satisfies the cocycle condition.
Finally, $\overline{1}$ is the identity since $\alpha$ is assumed to be normalized.
\end{proof}

\begin{rk} {\rm
Note that $K[G,\Theta,\alpha]$ carries the structure of a natural $K$-left vector space since
$$k\cdot \sum_{g \in G} a_g\overline{g} =
k\overline{1}\cdot \sum_{g \in G} a_g\overline{g} =
\sum_{g \in G} k(\Theta(1)a_g)\overline{g} =
\sum_{g \in G} ka_g\overline{g}. 
$$
Furthermore, $K[G,\Theta,\alpha]$ is in general not a $K$-algebra, since 
$\overline{g}(k\overline{1}) = (\Theta(g)k)\overline{g} \neq k \overline{g}$
for $g \in G$ and $k \in K$, if $\Theta(g)k \not= k$.}
\end{rk}

\begin{lem}\label{lem:isogroupcod}
Let $\Theta, \Theta' \in \ho(G,\Aut(K))$ and let $\alpha,\alpha ' \in \ZZ^2(G,K^*)$. If $\delta:G \rightarrow G$ and $\kappa: G \rightarrow K^*$ with $\kappa(1)=1$, then 
\begin{equation} \label{eq}
\Psi: K[G,\Theta,\alpha] \rightarrow K[G,\Theta',\alpha'], \ \  a_g \overline{g}\mapsto a_g\kappa(g) \overline{\delta(g)}
\end{equation} defines a $K$-left distance preserving isomorphism if and only  if 
\begin{itemize}
\item[\rm a)] $\delta$ is a group automorphism satisfying $\Theta=\Theta'  \delta$, and
\item[\rm b)]  $\alpha(g,h)=\kappa(g)\Theta'(\delta(g))(\kappa(h))\kappa(gh)^{-1}\alpha'_\delta(g,h)$ for all $g,h \in G$, \\ where $\alpha'_\delta(g,h)=\alpha'(\delta(g),\delta(h))$  for all $g,h\in G$.
\end{itemize}
\end{lem}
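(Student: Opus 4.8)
The plan is to verify directly that the prescribed formula defines a ring isomorphism exactly when (a) and (b) hold; the $K$-left linearity and the distance-preservation are then automatic. Indeed, $\Psi$ extends by additivity to a $K$-left linear map of the underlying space $KG$, since $\Psi(k\sum_g a_g\overline g)=\sum_g ka_g\kappa(g)\overline{\delta(g)}=k\,\Psi(\sum_g a_g\overline g)$, and because every $\kappa(g)$ lies in $K^{*}$, $\Psi$ is bijective if and only if $\delta$ is a permutation of $G$; a bijective map sending each basis vector to a nonzero scalar multiple of a basis vector is monomial and hence preserves Hamming weight and distance. So the assertion reduces to: $\Psi$ is multiplicative and $\delta$ is a bijection $\iff$ (a) and (b).

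For the \emph{if} part, suppose (a) and (b) hold. Then $\delta$ is bijective, so $\Psi$ is bijective, and by additivity and $K$-left linearity it suffices to check $\Psi((a_g\overline g)(b_h\overline h))=\Psi(a_g\overline g)\,\Psi(b_h\overline h)$ for arbitrary $g,h\in G$ and $a_g,b_h\in K$. Expanding the left-hand side via \eqref{eq:proddef} in $K[G,\Theta,\alpha]$ and applying $\Psi$ gives $a_g(\Theta(g)b_h)\,\alpha(g,h)\,\kappa(gh)\,\overline{\delta(gh)}$, whereas expanding the right-hand side via \eqref{eq:proddef} in $K[G,\Theta',\alpha']$ gives $a_g\,\kappa(g)\,\Theta'(\delta(g))(b_h\kappa(h))\,\alpha'(\delta(g),\delta(h))\,\overline{\delta(g)\delta(h)}$. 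The group parts agree since $\delta(gh)=\delta(g)\delta(h)$ by (a); since $\Theta'(\delta(g))$ is a field automorphism we split $\Theta'(\delta(g))(b_h\kappa(h))=\Theta'(\delta(g))(b_h)\cdot\Theta'(\delta(g))(\kappa(h))$ and use $\Theta(g)=\Theta'(\delta(g))$ from (a) to match the $b_h$-factor; and (b), read as $\alpha(g,h)\kappa(gh)=\kappa(g)\,\Theta'(\delta(g))(\kappa(h))\,\alpha'(\delta(g),\delta(h))$, makes the remaining scalars coincide. Thus $\Psi$ is a ring isomorphism.

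For the \emph{only if} part, suppose $\Psi$ is such an isomorphism. Bijectivity forces $\delta$ to be a permutation of $G$. Applying multiplicativity to the pair $\overline g,\overline h$ and using $\overline g\,\overline h=\alpha(g,h)\overline{gh}$, the left-hand side equals $\alpha(g,h)\kappa(gh)\,\overline{\delta(gh)}$ and the right-hand side equals $\kappa(g)\,\Theta'(\delta(g))(\kappa(h))\,\alpha'(\delta(g),\delta(h))\,\overline{\delta(g)\delta(h)}$; both are nonzero scalar multiples of a single basis element, so equality of supports yields $\delta(gh)=\delta(g)\delta(h)$, hence $\delta$ is a group automorphism, and equality of coefficients yields exactly (b). Finally, to identify $\Theta$, apply multiplicativity to $\overline g$ and $k\overline 1$ for $k\in K$: from $\overline g(k\overline 1)=(\Theta(g)k)\overline g$ one gets $\Psi(\overline g(k\overline 1))=(\Theta(g)k)\kappa(g)\,\overline{\delta(g)}$, while $\Psi(\overline g)\,\Psi(k\overline 1)=\kappa(g)\,(\Theta'(\delta(g))k)\,\overline{\delta(g)}$ because $\alpha'$ is normalized; cancelling $\kappa(g)\in K^{*}$ gives $\Theta(g)k=\Theta'(\delta(g))k$ for all $k$, i.e. $\Theta=\Theta'\delta$, the rest of (a).

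I expect the only real work to be bookkeeping: tracking where the automorphisms $\Theta'(\delta(g))$ act — on $b_h$, on $\kappa(h)$, and, by the stabilizing condition \eqref{eq1}, \emph{not} on the cocycle values — and noticing that the computation on the bare basis elements $\overline g,\overline h$ already forces $\delta$ to be a homomorphism, whereas the full hypothesis $\Theta=\Theta'\delta$ is genuinely needed only to handle products of elements carrying non-trivial field coefficients. Beyond \eqref{eq:proddef}, the normalization of $\alpha,\alpha'$, and \eqref{eq1}, no further ingredient is required.
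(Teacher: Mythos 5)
Your proof is correct and follows essentially the same route as the paper's: compare $\Psi\bigl((a\overline{g})(b\overline{h})\bigr)$ with $\Psi(a\overline{g})\Psi(b\overline{h})$, read off $\delta(gh)=\delta(g)\delta(h)$ from the supports, condition b) from the coefficients at $a=b=1$, and $\Theta=\Theta'\delta$ from the case $h=1$ with a field coefficient, with the converse being the same computation run backwards. You merely spell out details the paper leaves implicit (bijectivity of $\Psi$ versus $\delta$, distance preservation, and the converse), so no further comment is needed.
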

\begin{proof}
Suppose that the map in (\ref{eq}) is a $K$-left isomorphism. For $g,h \in G$ and $a,b \in K$, we have
$$  
\Psi(a\overline{g} \cdot b \overline{h})=\Psi\left(a(\Theta(g)b)\alpha(g,h) \overline{gh}\right)=a(\Theta(g)b)\alpha(g,h) \kappa(gh) \overline{\delta(gh)}
$$
and
$$ \begin{array}{rcl} 
\Psi(a\overline{g} \cdot b \overline{h}) & = & \Psi(a\overline{g}) \Psi(b \overline{h})=\left(a\kappa(g) \overline{\delta(g)}\right)\cdot\left(b \kappa(h) \overline{\delta(h)}\right) \\[1ex] & = &  a\kappa(g) \Theta'(\delta(g))(b\kappa(h))\alpha'(\delta(g),\delta(h))\overline{\delta(g)\delta(h)}.
\end{array}$$
Thus, $\delta(gh)=\delta(g)\delta(h)$ for $g,h \in G$. Since $\delta$ is clearly bijective, it is actually a group automorphism. Next, taking $a=b=1$ in  the equations above, we get
$$
\alpha(g,h)=\kappa(g)\Theta'(\delta(g))\kappa(h)\kappa(gh)^{-1} \alpha'_\delta(g,h),
$$ for $g,h \in G$. Similarly taking $a=1$ and $h=1$  we may deduce that $\Theta(g)(b)=\Theta'(\delta(g))(b)$ for all $g \in G$ and $b \in \mathbb{F}_q$, which implies $\Theta=\Theta'\delta$. \\
The converse follows from the computations above.
\end{proof}

 For a finite ring $A$, we write 
 $\widehat{A}={\rm Hom}_{\mathbb{Z}}(A,\mathbb{C}^\times)$. Recall that $\widehat{A}$ becomes a right $A$-module by 
  $(\chi\cdot a)(a')=\chi(aa')$ 
  for  $a, a' \in A$ and  $\chi \in \widehat{A}$. Furthermore, $A$ is a Frobenius ring if and only if there is an isomorphism of right $A$-modules between $A$ and $\widehat{A}$ (see for instance \cite[Theorem 3.10]{Wood99}). 
\begin{theo}
The ring $R$ is a Frobenius ring.
\end{theo}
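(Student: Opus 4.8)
The plan is to exhibit an explicit isomorphism of right $R$-modules $\varphi\colon R \to \widehat R$, which by the criterion recalled just before the statement (see \cite[Theorem 3.10]{Wood99}) is exactly what is needed. The natural candidate is built from a linear functional $\lambda\colon R \to \mathbb{C}^\times$-valued character obtained by composing the "coefficient at $\overline 1$" projection $\pi\colon R \to K$, $\pi\big(\sum_g a_g\overline g\big)=a_1$, with a fixed nontrivial additive character $\psi\colon (K,+)\to \mathbb{C}^\times$. Set $\chi_0 := \psi\circ\pi \in \widehat R$ and define $\varphi(a) := \chi_0\cdot a$, i.e.\ $\varphi(a)(a') = \psi(\pi(a'a))$ for $a,a'\in R$. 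This map is additive and, by the very definition of the right $R$-action on $\widehat R$, it is a homomorphism of right $R$-modules: $\varphi(ab)(a') = \psi(\pi(a'ab)) = (\varphi(a)\cdot b)(a')$.

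The one substantive point — and the main obstacle — is injectivity of $\varphi$; since $R$ and $\widehat R$ are finite sets of the same cardinality $|K|^{|G|}$, injectivity will immediately give bijectivity. So suppose $\varphi(a)=0$, i.e.\ $\psi(\pi(a'a))=1$ for all $a'\in R$. First I would note that $\psi$ being nontrivial on $(K,+)$ forces $\pi(a'a)=0$ for every $a'\in R$ (standard: the only $K$-subspace on which a nontrivial additive character is trivial is $\{0\}$, and here one can test with $a' = k\overline 1$ to pass from "character trivial" to "all coefficients vanish"). Writing $a = \sum_g a_g\overline g$ and taking $a' = \overline{h}$, we compute from the multiplication rule \eqref{eq:proddef} that $\overline h \cdot a = \sum_g (\Theta(h)a_g)\,\alpha(h,g)\,\overline{hg}$, whose coefficient at $\overline 1$ is $(\Theta(h)a_{h^{-1}})\,\alpha(h,h^{-1})$. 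Hence $\pi(\overline h\, a) = \Theta(h)(a_{h^{-1}})\,\alpha(h,h^{-1})$. Since $\alpha$ takes values in $K^* $ and $\Theta(h)$ is an automorphism of $K$, this vanishes iff $a_{h^{-1}}=0$. Letting $h$ range over $G$ gives $a_g = 0$ for all $g$, i.e.\ $a=0$. Therefore $\varphi$ is injective, hence an isomorphism of right $R$-modules, and $R$ is Frobenius.

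A couple of routine verifications I would fold in without detail: that $\pi$ is well defined and additive is clear; that $\chi_0$ is a genuine element of $\widehat R = \operatorname{Hom}_{\mathbb Z}(R,\mathbb C^\times)$ only requires $\psi$ to be a group homomorphism on $(R,+)$ composed with the additive map $\pi$, which it is; and the counting argument $|\widehat R| = |R|$ uses that $R$ is a finite abelian group under addition (being isomorphic as such to $(K,+)^{|G|}$), so $\widehat R \cong R$ as abelian groups. The only place the hypotheses on $\Theta$ and $\alpha$ genuinely enter is in the computation of $\pi(\overline h\, a)$ above, where we need $\alpha(h,h^{-1}) \in K^*$ (true since $\alpha$ is $K^*$-valued) and $\Theta(h)$ injective on $K$ (true since $\Theta(h)\in\Aut(K)$); the cocycle identity and the stabilization condition \eqref{eq1} are not needed for this particular argument beyond their role in making $R$ a ring at all.
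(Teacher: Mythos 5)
Your proof is correct and follows essentially the same route as the paper's: both build the generating character of $R$ by composing the ``coefficient at $\overline{1}$'' projection with a generating (i.e.\ nontrivial additive) character of $K$, and both establish injectivity by testing against multiples of group elements, where the key point is that $\alpha(h,h^{-1})\in K^*$ and $\Theta(h)\in\Aut(K)$ so the relevant coefficient $\Theta(h)(a_{h^{-1}})\alpha(h,h^{-1})$ vanishes iff $a_{h^{-1}}=0$. The only cosmetic difference is that the paper phrases the last step via the bijectivity of $x\mapsto\chi_K\cdot x$ on $K$, while you use the $K$-subspace argument with test elements $k\overline{1}$; these are interchangeable.
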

\begin{proof}
Since $K$ is a Frobenius ring, there is $\chi_K \in \widehat{K}$ such that
$$
\Phi :x \in K \mapsto (\chi_K \cdot x) \in \widehat{K}
$$ is an isomorphism of right $K$-modules. We define $\chi_R \in \widehat{R}$ by $\chi_R\left(\sum_g a_g \overline{g}\right)=\chi_K(a_1)$ and claim that the following morphism of right $R$-modules 
$$
\Psi: r \in R \mapsto \chi_R \cdot r \in \widehat{R}
$$ is an isomorphism. Since $| R |=| \widehat{R}|$, it remains to prove that $\Psi$ is injective. For that, let $r=\sum_g r_g \overline{g} \in R$ and $r'=\sum_g r'_g \overline{g} \in R$ such that $\Psi(r)=\Psi(r')$. For $g \in G$ and $x \in K$, we have  
$$
\left(\chi_R \cdot r\right)\left( (\Theta\left( g^{-1}\right)x)\overline{g^{-1}}\right)=\left(\chi_R \cdot r'\right)\left( (\Theta\left(g^{-1}\right)x)\overline{g^{-1}}\right),
$$
that is,
$$
\chi_R\left(r \cdot \left( (\Theta(g^{-1})x)\overline{g^{-1}}\right)\right)=\chi_R\left(r' \cdot \left( (\Theta(g^{-1})x)\overline{g^{-1}}\right)\right),
$$ and 
so,
$$\chi_K\left(r_g \alpha(g,g^{-1})x\right)=\chi_K\left(r'_g \alpha(g,g^{-1}x)\right).$$
Since 
$$
\left(\chi_K \cdot \left(r_g \alpha(g,g^{-1})\right)\right)(x)=\chi_K\left(r_g \alpha(g,g^{-1})x\right),$$
we obtain $\chi_K \cdot \left(r_g \alpha(g,g^{-1})\right)=\chi_K \cdot \left(r'_g \alpha(g,g^{-1})\right)$. As $\Phi$ is bijective, we finally get $r_g \alpha(g,g^{-1})=r'_g \alpha(g,g^{-1})$, and so $r_g=r'_{g}$ for all $g \in G$, which proves $r=r'$. Therefore, the proof is complete.
\end{proof}

Next we prove an extension of Maschke's Theorem, which is well-known for twisted group algebras (see \cite[Chap. 3, Theorem 2.10]{K85}).

\begin{theo} \label{Maschke} If $\cha K \nmid |G|$, then $R$ is a semisimple ring.
\end{theo}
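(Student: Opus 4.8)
The plan is to run the classical Maschke averaging argument, the one novelty being that $R$ is \emph{not} a $K$-algebra, so I must check that the averaged projection remains $R$-linear; this is precisely where the stabilisation hypothesis \eqref{eq1} enters (the same hypothesis that made the multiplication on $R$ associative).

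First I would reduce to a splitting statement: by the standard characterisation of semisimple rings it suffices to prove that for every left $R$-module $M$ and every $R$-submodule $N \subseteq M$ there is an $R$-submodule $N'$ with $M = N \oplus N'$. Since $K$ is a field, $M$ and $N$ are $K$-vector spaces, so there is a $K$-linear projection $\rho \colon M \to N$ with $\rho|_N = \mathrm{id}_N$.

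Next I record two elementary facts. Each $\overline g$ is a unit of $R$: from $\overline g\,\overline{g^{-1}} = \alpha(g,g^{-1})\overline 1$ one gets $\overline g^{\,-1} = \alpha(g,g^{-1})^{-1}\overline{g^{-1}}$. And since $\cha K \nmid |G|$, the scalar $|G|\overline 1$ is invertible in $K \subseteq R$. I then set
$$\widetilde\rho \colon M \longrightarrow M, \qquad \widetilde\rho(m) = \frac{1}{|G|}\sum_{g \in G}\overline g\,\rho\!\left(\overline g^{\,-1} m\right).$$
Since $\rho(M) \subseteq N$ and $N$ is an $R$-submodule, $\widetilde\rho(M) \subseteq N$; and for $n \in N$ each summand equals $\overline g\,\overline g^{\,-1} n = n$, so $\widetilde\rho|_N = \mathrm{id}_N$. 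The crux is that $\widetilde\rho$ is $R$-linear. For $K$-linearity I would use $\overline g^{\,-1}(km) = (\Theta(g^{-1})k)\,\overline g^{\,-1} m$ followed by $\overline g\big((\Theta(g^{-1})k)x\big) = \big(\Theta(g)\Theta(g^{-1})k\big)(\overline g x) = k\,\overline g x$. For compatibility with left multiplication by a fixed $\overline h$, I would substitute $g = hg'$ in the sum for $\widetilde\rho(\overline h m)$ and use the identities $\overline{hg'} = \alpha(h,g')^{-1}\overline h\,\overline{g'}$ and $\overline{hg'}^{\,-1}\overline h = \alpha(h,g')\,\overline{g'}^{\,-1}$ to rewrite the general summand as $\overline h\,\overline{g'}\,\rho(\overline{g'}^{\,-1} m)$; here a factor $\Theta(g)\alpha(h,g')$ appears a priori and collapses to $\alpha(h,g')$ exactly because $\Theta(G)$ stabilises $\alpha$ by \eqref{eq1}. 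Summing over $g$ (equivalently over $g'$) gives $\widetilde\rho(\overline h m) = \overline h\,\widetilde\rho(m)$, and since $R$ is generated as a ring by $K$ and the $\overline g$, $R$-linearity follows.

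Finally, $\widetilde\rho$ is an $R$-linear idempotent with image $N$, so $M = N \oplus \ker\widetilde\rho$ with $\ker\widetilde\rho$ an $R$-submodule. As $M$ and $N$ were arbitrary, every left $R$-module is semisimple, hence $R$ is semisimple. I do not expect a real obstacle beyond the bookkeeping in the $R$-linearity check; the one point to watch is that, because $R$ is not a $K$-algebra, the cocycle factors and the $\Theta$'s must cancel in just the right way, which they do thanks to \eqref{eq1}.
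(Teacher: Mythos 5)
Your proof is correct and follows essentially the same route as the paper: both apply the classical Maschke averaging operator $m \mapsto \frac{1}{|G|}\sum_{g}\overline{g}\,\rho(\overline{g}^{\,-1}m)$ to a $K$-linear projection and verify $K$-linearity and $R$-linearity, with the stabilisation hypothesis \eqref{eq1} cancelling the $\Theta$'s and cocycle factors exactly as you indicate. The only differences are cosmetic (you make the unit identity $\overline{g}^{\,-1}=\alpha(g,g^{-1})^{-1}\overline{g^{-1}}$ explicit, while the paper absorbs it into the definition of its map $\Lambda$).
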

\begin{proof} Let $V$ be an $R$-left module and $W$ a submodule of $V$. We may write $V= W \oplus U$ with a $K$-vector space U.
Let $\rho$ be the projection of $V$ onto $W$ with kernel $U$. For $v \in V$ we define the map $\Lambda: V \longrightarrow W$ by
$$ \Lambda(v) = \frac{1}{|G|} \sum_{g \in G} \frac{1}{\alpha(g,g^{-1})}\overline{g} \rho(\overline{g^{-1}}v) 
= \frac{1}{|G|} \sum_{g \in G}\overline{g} \rho(\overline{g}^{-1}v). $$
If $a \in K$, then

$$
\begin{array}{rcl}
\Lambda(av) & = & \frac{1}{|G|} \sum_{g \in G}  \overline{g} \rho(\overline{g}^{-1}av) \\[1ex]
&=& \frac{1}{|G|} \sum_{g \in G}  \overline{g} \rho((\Theta(g^{-1})a)\overline{g}^{-1}v) \\[1ex]
 & = & \frac{1}{|G|} \sum_{g \in G}  \overline{g} (\Theta(g^{-1})a)\rho(\overline{g}^{-1}v) \\[1ex] 
 & = &\frac{1}{|G|} \sum_{g \in G} 
\Theta(g)(\Theta(g^{-1})a) \overline{g}\rho(\overline{g}^{-1}v)\\[1ex]
 &=& \frac{1}{|G|} \sum_{g \in G}  a \overline{g}\rho(\overline{g}^{-1}v) = a \Lambda(v).
\end{array} 
$$
Thus $\Lambda$ is $K$-linear.
For  $h \in G$, we obtain
$$ \begin{array}{rcl}
\Lambda(\overline{h}v) & = & \frac{1}{|G|} \sum_{g \in G}\overline{g} \rho(\overline{g}^{-1}\,\overline{h}v)
= \frac{1}{|G|} \sum_{x \in G}\overline{hx} \rho({(\overline{hx})^{-1}}\,\overline{h}v).
\end{array}
$$
Since $\overline{hx}^{-1} = \alpha(h,x) \overline{x}^{-1}\overline{h}^{-1}$ and $\Theta(g)\alpha(x,y) =\alpha(x,y)$, we get
$$ \begin{array}{rcl}
\Lambda(\overline{h}v) & = & 
\frac{1}{|G|} \sum_{x \in G} \alpha(h,x)^{-1}\overline{h}\overline{x} \rho((\alpha(h,x)\overline{x}^{-1}\overline{h}^{-1}\overline{h}v) \\[1ex]
&=& \overline{h}\frac{1}{|G|} \sum_{x \in G} \alpha(h,x)^{-1}\overline{x} \rho((\alpha(h,x)\overline{x}^{-1}\overline{h}^{-1}\overline{h}v) \\[1ex]
&=& \overline{h}\frac{1}{|G|} \sum_{x \in G} \overline{x} \rho(\overline{x}^{-1}v) \\[1ex]
&=& \overline{h}\Lambda(v).
\end{array}
$$
Thus $\Lambda$ is $R$-linear and the identity on $W$. It follows that $V = W \oplus \kernel\Lambda$ where $\kernel\Lambda$ is a
$R$-left module.
\end{proof}

\begin{defi} For  $a =\sum_{g \in G} a_g \overline{g} \in R$ we define the adjoint $\widehat{a}$ of $a$ by
$$ \widehat{a} = \sum_{g \in G} (\Theta(g^{-1})a_g)\alpha(g,g^{-1}) \overline{g^{-1}}.$$
\end{defi}

Note that the map \, $\widehat{}$ \ is not K-linear. If we consider $\widehat{a} \in K[G,\Theta,\alpha^{-1}]$ we have the following.

\begin{theo}\label{antiiso} \  $\widehat{}\ : R=K[G,\Theta,\alpha] \longrightarrow K[G,\Theta,\alpha^{-1}]$ is  distance preserving ring anti-isomorphism.
Moreover, if $\alpha = \alpha^{-1}$, then \ $\widehat{\widehat{a}} = a$ for all $a \in R$.
\end{theo}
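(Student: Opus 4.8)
The plan is to verify in turn that the map $a\mapsto\widehat a$ is additive, bijective, distance preserving, and multiplication reversing, and then to treat the involution claim separately.

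Additivity is immediate: for $a=\sum_g a_g\overline g$ and $b=\sum_g b_g\overline g$ the coefficient of $\overline{g^{-1}}$ in $\widehat{a+b}$ is $\Theta(g^{-1})(a_g+b_g)\,\alpha(g,g^{-1})$, which equals the corresponding coefficient of $\widehat a+\widehat b$ since each $\Theta(g^{-1})\in\Aut(K)$ is additive. Bijectivity follows because $g\mapsto g^{-1}$ permutes $G$ and, for each fixed $g$, the map $a_g\mapsto\Theta(g^{-1})(a_g)\,\alpha(g,g^{-1})$ is a bijection of $K$ (composition of the automorphism $\Theta(g^{-1})$ with multiplication by the unit $\alpha(g,g^{-1})$); hence $a\mapsto\widehat a$ is a bijection of the underlying sets, and being additive it is already an isomorphism of abelian groups. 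For distance preservation, note that $\supp(\widehat a)=\{\,g^{-1}:g\in\supp(a)\,\}$, so $\wt(\widehat a)=\wt(a)$, and then $\dist(\widehat a,\widehat b)=\wt(\widehat{a-b})=\wt(a-b)=\dist(a,b)$.

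The heart of the argument is anti-multiplicativity, $\widehat{ab}=\widehat b\,\widehat a$. By additivity it suffices to check this on basis elements $a=a_g\overline g$, $b=b_h\overline h$. Expanding $ab=a_g(\Theta(g)b_h)\alpha(g,h)\overline{gh}$ and applying the definition of the adjoint (which uses $\alpha$, even though its codomain is $K[G,\Theta,\alpha^{-1}]$) gives the coefficient of $\overline{(gh)^{-1}}=\overline{h^{-1}g^{-1}}$ in $\widehat{ab}$; on the other side one computes $\widehat b\,\widehat a$ \emph{inside} $K[G,\Theta,\alpha^{-1}]$, so that the product $\overline{h^{-1}}\cdot\overline{g^{-1}}$ contributes the factor $\alpha(h^{-1},g^{-1})^{-1}$. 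Using that $\Theta$ is multiplicative and that $\Theta(G)$ fixes the values of $\alpha$ by \eqref{eq1}, both coefficients reduce to $\Theta(h^{-1}g^{-1})(a_g)\,\Theta(h^{-1})(b_h)$ times a product of values of $\alpha$, and the identity left to check is
\[
\alpha(g,h)\,\alpha(gh,h^{-1}g^{-1})\,\alpha(h^{-1},g^{-1})=\alpha(g,g^{-1})\,\alpha(h,h^{-1}).
\]
This comes from the cocycle condition: applied to $(gh,h^{-1},g^{-1})$ it gives $\alpha(gh,h^{-1}g^{-1})\,\alpha(h^{-1},g^{-1})=\alpha(g,g^{-1})\,\alpha(gh,h^{-1})$, and applied to $(g,h,h^{-1})$ together with normalization it gives $\alpha(g,h)\,\alpha(gh,h^{-1})=\alpha(h,h^{-1})$; multiplying the two yields the displayed equality. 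I expect this cocycle bookkeeping to be the only genuinely delicate point of the proof.

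Finally, for the involution claim, first record the symmetry $\alpha(g,g^{-1})=\alpha(g^{-1},g)$, valid for every normalized cocycle: the cocycle condition on $(g,g^{-1},g)$ reads $\alpha(g,1)\alpha(g^{-1},g)=\alpha(1,g)\alpha(g,g^{-1})$, and normalization kills the two outer factors. Now assume $\alpha=\alpha^{-1}$, so $K[G,\Theta,\alpha^{-1}]=R$ and $\alpha(x,y)^2=1$ for all $x,y$. Reindexing, $\widehat a=\sum_h b_h\overline h$ with $b_h=\Theta(h)(a_{h^{-1}})\,\alpha(h^{-1},h)$; applying the adjoint once more, the coefficient of $\overline{h^{-1}}$ in $\widehat{\widehat a}$ is $\Theta(h^{-1})(b_h)\,\alpha(h,h^{-1})=a_{h^{-1}}\,\alpha(h^{-1},h)\,\alpha(h,h^{-1})$, using $\Theta(h^{-1})\Theta(h)=\mathrm{id}$ and that $\Theta(h^{-1})$ fixes $\alpha(h^{-1},h)$. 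By the symmetry this factor is $\alpha(h,h^{-1})^2=1$, so $\widehat{\widehat a}=\sum_h a_{h^{-1}}\overline{h^{-1}}=a$, as claimed.
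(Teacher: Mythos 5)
Your proof is correct and follows essentially the same route as the paper: the same coefficient computations for $\widehat{ab}$ and $\widehat b\,\widehat a$, reduction to the same cocycle identity, and the same use of $\alpha(g,g^{-1})=\alpha(g^{-1},g)$ for the involution claim. The only difference is that you explicitly derive the key identity $\alpha(g,h)\alpha(gh,h^{-1}g^{-1})\alpha(h^{-1},g^{-1})=\alpha(g,g^{-1})\alpha(h,h^{-1})$ from two instances of the cocycle condition, a step the paper merely asserts.
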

\begin{proof} Clearly, the map \ $\widehat{}$ \ is additive and maps the identity  onto the identity since $\alpha$ is normalized.
Furthermore, \ $\widehat{}$ \ defines a bijection. Thus, we only have to prove $\widehat{a\,b} = \widehat{b}\,\widehat{a}$, for
$a,b \in R$. Let $a = \sum_{g \in G} a_g \overline{g}$ and $b = \sum_{h \in G} b_h \overline{h}$.
One easily computes 
\begin{align*}
\widehat{a\, b} = \sum_{g,h \in G} (\Theta(h^{-1}g^{-1})a_g)(\Theta(h^{-1}b_h)\alpha(g,h)\alpha(gh,h^{-1}g^{-1}) \overline{h^{-1}g^{-1}}
\end{align*}
and
\begin{align*}
\widehat{b}\, \widehat{a} = \sum_{g,h \in G} (\Theta(h^{-1}g^{-1})a_g)(\Theta(h^{-1}b_h)\alpha(h,h^{-1})\alpha(g,g^{-1})
\alpha^{-1}(h^{-1},g^{-1})\overline{h^{-1}g^{-1}}. 
\end{align*}
Applying the normalized cocycle condition, we see that
$$ \alpha(g,h)\alpha(gh,h^{-1}g^{-1}) = \alpha(h,h^{-1})\alpha(g,g^{-1})\alpha^{-1}(h^{-1},g^{-1}),$$
hence   $\widehat{a\,b} = \widehat{b}\,\widehat{a}$.
Finally note that
$$ \widehat{\widehat{a}} = \sum_{g \in G} \Theta(g)\left[(\Theta(g^{-1})a_g)\alpha(g,g^{-1})\alpha(g^{-1},g)\right]\overline{g} =
 \sum_{g \in G} a_g\alpha(g,g^{-1})\alpha(g^{-1},g)\overline{g}, $$
which proves $\widehat{\widehat{a}} = a$, since $\alpha(g,g^{-1})=\alpha(g^{-1},g)$ which follows from the cocycle condition,
and $\alpha = \alpha^{-1}$, by assumption. 
\end{proof}

Let $K^\alpha G= K[G,1,\alpha]$ be a twisted group code over a field $K$ of characteristic $p$. The set
$ \widehat{G} = K^* \times G$ becomes a group via the multiplication
$$(\lambda,g)\cdot (\mu,h) = (\alpha(g,h)\lambda \mu,gh).$$ Since $\alpha(1,g)=\alpha(g,1)$ for all $g \in G$, we see that
$$N=\{(\lambda,1) \mid \lambda \in K^*\} \cong K^*$$ is a central subgroup of $\widehat{G}$ and
$G \cong \widehat{G}/N$.

\begin{prop} \label{P1} Let $\cha K =p$. If $G$ is $p$-nilpotent with a cyclic Sylow $p$-subgroup, then every left (right) ideal in a twisted group algebra is principal.
\end{prop}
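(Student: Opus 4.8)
The plan is to use the $p$-nilpotence of $G$ to put $R := K^\alpha G$ into a very controlled form. Write $G = N \rtimes P$ with $N \trianglelefteq G$ a normal $p$-complement (so $\gcd(|N|,p)=1$) and $P$ a cyclic Sylow $p$-subgroup, $|P| = p^k$. The goal is to show that every block (indecomposable two-sided direct summand) of $R$ is isomorphic to a full matrix ring $M_\ell(\Lambda)$ over a \emph{commutative chain ring} $\Lambda$, i.e. a finite commutative local ring whose maximal ideal is principal; granting this, the proposition follows, since (as explained at the end) such a matrix ring is a left and right principal ideal ring, and a finite direct product of principal ideal rings is again one.

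First the structural reduction. By Theorem \ref{Maschke}, the twisted group algebra $A := K^\alpha N$ (restricting $\alpha$ to $N$) is semisimple, and being finite it is a direct product of matrix algebras over finite fields. Choosing a preimage $\overline{x}$ of a generator of $P$ realizes $R$ as a crossed product $A * P$: it is free of rank $p^k$ as a left $A$-module, conjugation by $\overline{x}$ induces an automorphism of $A$, and $\overline{x}^{\,p^k}$ is a unit of $A$. This automorphism permutes the block idempotents of $A$, and summing them over its orbits produces central idempotents of $R$, hence a decomposition $R = \prod_j R_j$. It therefore suffices to treat one $R_j$, corresponding to a single orbit of blocks of length $m$ containing a block $B \cong M_n(\F_{q'})$ whose stabiliser in $P$ is a (necessarily cyclic) subgroup $Q$. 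Writing the Peirce decomposition of $R_j$ along the conjugates of the identity of $B$, which are pairwise conjugate, gives $R_j \cong M_m(B * Q)$; doing the same along the matrix units of $B$ gives $B * Q \cong M_n(D * Q)$, where $D := \F_{q'}$ is a finite field. Thus $R_j \cong M_{mn}(D * Q)$, with $D * Q$ a crossed product of a finite field by a cyclic $p$-group.

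It remains to understand $E := D * Q$. Since $Q = \langle y \rangle$ is cyclic of order $p^a$, the crossed product is determined by the automorphism $\tau$ of $D$ induced by a preimage $\overline{y}$ of $y$ together with the unit $d := \overline{y}^{\,p^a} \in D$, which is fixed by $\tau$; as $\tau^{p^a} = \mathrm{id}$, the element $t^{p^a}$ is central in the skew polynomial ring $D[t;\tau]$, and one gets $E \cong D[t;\tau]/(t^{p^a}-d)$. A short computation with this skew polynomial ring — using that the Frobenius of a finite field is surjective, so that the appropriate $p$-power root $e_0$ of $d$ exists, necessarily in the fixed field $D^{\langle\tau\rangle}$ — shows that $J(E)$ is generated by the central element $c := t^{p^b} - e_0$, where $p^b$ is the order of $\tau$, that $c$ is nilpotent, and that $E/J(E)$ is the cyclic algebra $(D/D^{\langle\tau\rangle},\tau,e_0)$, a central simple algebra over the finite field $D^{\langle\tau\rangle}$, hence by Wedderburn's little theorem a full matrix algebra $M_r(D^{\langle\tau\rangle})$. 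Lifting a primitive idempotent of $M_r(D^{\langle\tau\rangle})$ through the nilpotent ideal $J(E)$ and using that its conjugate lifts remain pairwise conjugate yields $E \cong M_r(eEe)$ for a primitive idempotent $e$; here $eEe$ is local with residue field $D^{\langle\tau\rangle}$ and with principal (central) maximal ideal $c\cdot eEe$, and a dimension count over $D^{\langle\tau\rangle}$ forces $eEe = D^{\langle\tau\rangle}[c]/(c^{p^{a-b}})$, a commutative chain ring. Hence $E$, and with it each $R_j$, is a full matrix ring over a commutative chain ring, as wanted.

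Finally, if $\Lambda$ is a commutative chain ring with maximal ideal $\mathfrak m$ and $N \ge 1$, then $M_N(\Lambda)$ is a left and right principal ideal ring: left ideals of $M_N(\Lambda)$ correspond to $\Lambda$-submodules $W$ of $\Lambda^N$ (the ideal of all matrices whose rows lie in $W$), and since $W/\mathfrak m W$ embeds into the $N$-dimensional $(\Lambda/\mathfrak m)$-space $\Lambda^N/\mathfrak m\Lambda^N$, Nakayama's lemma shows $W$ is generated by at most $N$ elements; placing these as the rows of a single matrix exhibits the left ideal as principal, and the right-hand case is symmetric. This completes the argument. The main obstacle, and where care is needed, is the crossed-product/Clifford bookkeeping in the second paragraph together with the centre-and-radical analysis of $D[t;\tau]/(t^{p^a}-d)$ in the third; an alternative, heavier route replaces all of this by block theory — a $p$-nilpotent group has only nilpotent blocks, each Morita equivalent to the group algebra of its defect group, which here is a cyclic $p$-subgroup of $P$, so each block is $\cong M_\ell(\F_q[x]/(x^{p^{k'}}))$ — but this needs the nilpotent-block machinery developed for twisted group algebras.
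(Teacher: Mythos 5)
Your proof takes a genuinely different route from the paper's. The paper's argument is a two-line reduction: it forms the central extension $\widehat{G}=K^*\times G$ attached to $\alpha$ (introduced just before the proposition), uses Linckelmann's result that $K^\alpha G$ is an epimorphic image of the ordinary group algebra $K\widehat{G}$ via $(\lambda,g)\mapsto \lambda\overline{g}$, observes that $\widehat{G}$ is again $p$-nilpotent with cyclic Sylow $p$-subgroups (the extra central factor $K^*$ is a $p'$-group), and then quotes \cite[Corollary 3.2]{BCW22}: every left ideal of $K\widehat{G}$ is principal, so the preimage of a given left ideal of $K^\alpha G$ is principal and its image under the epimorphism is as well. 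You instead prove a structure theorem from scratch: you write $K^\alpha G$ as a crossed product $K^\alpha N * P$ with $K^\alpha N$ semisimple (Theorem \ref{Maschke}), run the orbit/stabilizer (Clifford-type) analysis of the $P$-action on the blocks, reduce by Skolem--Noether to $E\cong D[t;\tau]/(t^{p^a}-d)$ for a finite field $D$, and conclude that each block of $K^\alpha G$ is a matrix ring over a commutative chain ring, hence a left and right principal ideal ring. Your route is much longer and leans on sketched but standard machinery (the Skolem--Noether adjustment of the chosen unit, lifting matrix units through the nilpotent radical); in exchange it is self-contained --- it does not outsource the essential content to \cite{BCW22} --- and it yields strictly more, namely an explicit description of $K^\alpha G$ as a product of matrix rings over chain rings, from which principality of left, right and two-sided ideals follows simultaneously. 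I checked the delicate points (centrality and nilpotency of $c=t^{p^b}-e_0$, the identification $E/J(E)\cong (D/D^{\langle\tau\rangle},\tau,e_0)\cong M_{p^b}(D^{\langle\tau\rangle})$, the dimension count giving $eEe\cong D^{\langle\tau\rangle}[X]/(X^{p^{a-b}})$), and they are correct.

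One justification in your final paragraph is wrong as stated: for a submodule $W\leq \Lambda^N$ the natural map $W/\mathfrak{m}W\rightarrow \Lambda^N/\mathfrak{m}\Lambda^N$ is in general not injective (already for $\Lambda=\Z/p^2\Z$, $N=1$, $W=p\Lambda$ it is the zero map), so you cannot bound the number of generators of $W$ that way. The fact you need --- over a chain ring every submodule of $\Lambda^N$ is generated by at most $N$ elements --- is nevertheless true and standard; for instance, argue by induction on $N$: the image of $W$ under the last-coordinate projection is an ideal of $\Lambda$, hence cyclic, generated by the image of some $w\in W$, while $W\cap \Lambda^{N-1}$ needs at most $N-1$ generators by induction, so $W$ needs at most $N$. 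With that repair your argument is complete.
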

\begin{proof} By \cite[Proposition 1.2.17]{L18}, the $K$-linear map $\rho: K\widehat{G} \longrightarrow K^\alpha G$ given by
$$ (\lambda,g) \mapsto \lambda \overline{g}$$
defines an $K$-algebra epimorphism. Note that
$\widehat{G}$ is $p$-nilpotent with cyclic Sylow $p$-subgroups. Now let $\overline{C}$ be a left ideal in $K^\alpha G$ and $C$ its preimage in $K\widehat{G}$. By \cite[Corollary 3.2]{BCW22}, $C$ is a principal ideal, i.e., $C=aK\widehat{G}$. Thus
$\overline{C} = \rho(a)K^\alpha G.$

\end{proof}

\begin{question} {\rm Is there an extension of Proposition \ref{P1} to left ideals in  twisted skew group rings?}
\end{question}

\section{Twisted skew group codes}


We shall use $R$ with the natural basis $\{\overline{g} \mid g \in G \}$ as an ambient space. Thus $R$ carries the Hamming distance and the
Euclidean bilinear form defined by  $\langle \overline{g}, \overline{h} \rangle = \delta_{g,h}$.

\begin{defi} \label{D1} A left ideal $C$ in $R=K[G,\Theta,\alpha]$ is called a {\it twisted skew group code}, more precisely a
{\it $(G,\Theta,\alpha)$-code}. We shortly write $C \leq R$ in this case.
\end{defi}
Note that $(G,1,1)$-codes are nothing else than group codes \cite{W21}, and $(G,1,\alpha)$-codes are twisted group codes \cite{CW21}. Inspired by \cite{BGU07} (see also \cite{D21}) we call $(G,\Theta,1)$-codes {skew group codes}. \\

For the reader's convenience we repeat a well-known definition.

\begin{defi} Let $\theta \in \Aut(K)$
and $\lambda \in K^*$. A linear code $C$ in $K^n$ is called \\
a) {\it $\theta$-cyclic} (or {\it skew-cyclic}) if $$(c_0,c_1, \ldots,c_{n-1}) \in C \Longrightarrow
 (\theta(c_{n-1}), \theta(c_0), \ldots, \theta(c_{n-2})) \in C,$$ \\
b) {\it $(\theta,\lambda)$-constacyclic}
(or {\it skew $\lambda$-constacyclic}) if $$(c_0,c_1, \ldots,c_{n-1}) \in C \Longrightarrow
 (\lambda \theta(c_{n-1}), \theta(c_0), \ldots, \theta(c_{n-2})) \in C.$$

\end{defi}

Let $C_n =\langle g \rangle$ and let $\theta \in \Aut(K)$ with $\ord(\theta) \mid n$.
We define $\Theta$ by $\Theta(g^i) =\theta^i$ for $i=0,\ldots, n-1$, and for $\lambda \in K^*$, the cocycle
$\alpha_\lambda$ by
\begin{equation} \label{consta-cocycle} \alpha_\lambda(g^i,g^j) =\left\{ \begin{array}{cl} 
  1 & \text{if} \ 0 \leq i+j < n \\ \lambda & \text{if} \ n\leq i+j \leq 2(n -1).
\end{array} \right.
\end{equation}
With this notation $(C_n,\Theta,1)$-codes are $\theta$-cyclic codes \cite{BGU07}, and $(C_n,\Theta,\alpha_\lambda)$-codes are
skew $\lambda$-constacyclic codes if $\theta(\lambda)=\lambda$ \cite{BBB19}. \\

Thus with the Definition \ref{D1} we cover many classes
of interesting codes in twisted skew group rings by specializing parameters. \\

The following result shows that all twisted skew group codes for a cyclic group are skew $\lambda$-constacyclic.

\begin{prop}
If $R=K[C_n,\Theta,\alpha]$ is a twisted skew group ring, then $R$ is skew $\lambda$-constacyclic for some $\lambda \in K^*$ fixed by $\Theta(g)$.
\end{prop}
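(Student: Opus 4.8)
The plan is to exhibit an explicit isomorphism between $R = K[C_n, \Theta, \alpha]$ and a twisted skew group ring $K[C_n, \Theta, \alpha_\lambda]$ of the special shape appearing in \eqref{consta-cocycle}, and then invoke the already-noted fact that $(C_n, \Theta, \alpha_\lambda)$-codes are precisely the skew $\lambda$-constacyclic codes (once $\Theta(g)$ fixes $\lambda$). Concretely, write $C_n = \langle g \rangle$ and set $x = \overline{g} \in R$. Since $R$ has $K$-basis $\overline{g^0}, \ldots, \overline{g^{n-1}}$ and multiplication twists by $\alpha$, the product of all the "edge" cocycle values, i.e. $\lambda := \prod_{i=0}^{n-1}\alpha(g, g^i)$ up to the natural normalization, should govern the relation $x^n = \lambda \overline{1}$. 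So the first step is a bookkeeping computation: prove by induction that $x^i = \left(\prod_{j=1}^{i-1}\alpha(g, g^j)\right)\overline{g^i}$ for $1 \le i \le n$, and in particular that $x^n = \lambda \overline{1}$ for the scalar $\lambda$ obtained this way.

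Next I would show $\lambda$ is fixed by $\Theta(g)$. This is immediate from \eqref{eq1}: every factor $\alpha(g, g^j)$ is $\Theta(G)$-invariant, hence so is their product. Then I would define the candidate map $\Psi \colon K[C_n, \Theta, \alpha_\lambda] \to R$ on basis elements by $\overline{g^i} \mapsto \kappa(g^i)^{-1} x^i = \kappa(g^i)^{-1}\left(\prod_{j=1}^{i-1}\alpha(g,g^j)\right)\overline{g^i}$, where $\kappa \colon C_n \to K^*$ is the function with $\kappa(g^i) = \prod_{j=1}^{i-1}\alpha(g, g^j)$ (and $\kappa(1)=1$). This is exactly the setup of Lemma~\ref{lem:isogroupcod} with $\delta = \mathrm{id}$, $\Theta' = \Theta$: condition (a) holds trivially, and condition (b) becomes the identity $\alpha_\lambda(g^i, g^j) = \kappa(g^i)\,\Theta(g^i)(\kappa(g^j))\,\kappa(g^{i+j})^{-1}\,\alpha(g^i, g^j)$. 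Verifying this splits into the two cases $i + j < n$ and $i + j \ge n$, using $\Theta(g^i)\kappa(g^j) = \kappa(g^j)$ (again by $\Theta(G)$-invariance of $\alpha$) and the cocycle identity to collapse the telescoping product; in the wrap-around case the extra factor that appears is precisely $\lambda$, matching the definition of $\alpha_\lambda$.

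Having checked (a) and (b), Lemma~\ref{lem:isogroupcod} gives that $\Psi$ is a $K$-left, distance-preserving ring isomorphism $K[C_n, \Theta, \alpha_\lambda] \xrightarrow{\sim} R$. Since $\Psi$ sends the natural basis to the natural basis up to nonzero scalars, it carries left ideals to left ideals and preserves Hamming weights, so the $(C_n, \Theta, \alpha)$-codes in $R$ are, coordinatewise up to the monomial rescaling by $\kappa$, exactly the $(C_n, \Theta, \alpha_\lambda)$-codes; and those are the skew $\lambda$-constacyclic codes with $\Theta(g)(\lambda) = \lambda$ as recorded right after \eqref{consta-cocycle}. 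One should phrase the conclusion so that "$R$ is skew $\lambda$-constacyclic" is read as: under the $K$-basis identification, the left ideals of $R$ are skew $\lambda$-constacyclic codes.

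The main obstacle is the case analysis in verifying condition (b) of Lemma~\ref{lem:isogroupcod}, specifically making the telescoping of $\kappa(g^i)\kappa(g^j)\kappa(g^{i+j})^{-1}$ against $\alpha(g^i, g^j)$ come out right. The subtlety is that $\kappa$ is defined as a product of cocycle values of the form $\alpha(g, g^j)$, whereas condition (b) involves $\alpha(g^i, g^j)$ for general $i$; bridging the two requires repeated use of the $2$-cocycle identity to rewrite $\alpha(g^i, g^j)$ as a telescoping product of $\alpha(g, g^k)$'s, and one has to track carefully when the index $i + j$ exceeds $n$ so that the single factor $\lambda = \prod_{k}\alpha(g,g^k)$ is produced exactly once. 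Everything else — the induction for $x^i$, the $\Theta$-invariance of $\lambda$ and of $\kappa$, and the transfer of the coding-theoretic properties — is routine given the earlier results.
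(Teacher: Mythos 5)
Your proposal is correct and follows essentially the same route as the paper: both reduce $\alpha$ to $\alpha_\lambda$ modulo a coboundary $\kappa$ (with $\lambda$ the product of the cocycle values along powers of $g$, which is $\Theta(G)$-fixed by \eqref{eq1}) and then apply Lemma~\ref{lem:isogroupcod} with $\delta=\mathrm{id}$. The only difference is that where the paper cites \cite[Problem 16, Chapter 10]{Mor96} for the existence of $\lambda$ and $\kappa$, you derive them explicitly from the relation $\overline{g}^{\,i}=\kappa(g^i)\overline{g^i}$ and $\overline{g}^{\,n}=\lambda\overline{1}$, which is a self-contained and perfectly valid way to open that black box.
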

\begin{proof}
Let $k$ denote the subfield of $K$ which is fixed by $\Theta(g)$. By the assumption on $\alpha$ (see (\ref{eq1})), the cocycle $\alpha$ takes only values in $k$. Thus, by
\cite[Problem 16, Chapter 10]{Mor96} there exists  $\lambda = \prod_{i=0}^{n-1}\alpha(g^i,g) \in k^*$ and  $\kappa: G \rightarrow k^*$ with $\kappa(1)=1$ such that
$$ \alpha(g^i,g^j) = \kappa(g^i) \kappa(g^j) \kappa(g^{i+j})^{-1} \alpha_\lambda(g^i,g^j).$$
 Using Lemma \ref{lem:isogroupcod}, with $\delta=id_{G}$, we have $K[G,\Theta,\alpha]\simeq K[G,\Theta,\alpha_\lambda]$.
\end{proof}

The next result was first proved for group codes in \cite{BWZ22}. An extension to twisted
group codes has been given in \cite{w23}. For the reader's convenience we repeat the proof here adapted to twisted skew group codes.

\begin{theo} \label{bound} If $0 \not= C \leq K[G,\Theta,\alpha] =R$ is a twisted skew group code of minimum distance
$\dist(C)$, then
$$ |G| \leq \dist(C) \cdot \dim C.$$
\end{theo}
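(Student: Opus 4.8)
The plan is to adapt the classical argument for group codes (from \cite{BWZ22}) which estimates the number of nonzero coordinates of a codeword summed over all the translates of a minimum-weight word. First I would pick a nonzero codeword $c = \sum_{g\in G} c_g\overline{g} \in C$ of minimum weight $\dist(C)$, and let $S = \supp(c) = \{g \in G \mid c_g \neq 0\}$, so $|S| = \dist(C)$. Because $C$ is a left ideal, for every $h \in G$ the element $\overline{h}c = \sum_{g} (\Theta(h)c_g)\alpha(h,g)\overline{hg}$ again lies in $C$; and since $\Theta(h)$ is an automorphism of $K$ and $\alpha(h,g)\in K^*$, the support of $\overline{h}c$ is exactly $hS$, so $\overline{h}c$ is again a minimum-weight word. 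The key structural point is that the translates $\{\overline{h}c \mid h\in G\}$ span a left subideal $C' \subseteq C$ (in fact $C' = Rc$), and every nonzero element of $C'$ has weight at least $\dist(C)$.

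Next I would count incidences in the set $\{(h,g) \in G\times G \mid g \in hS\}$ in two ways: it has exactly $|G|\cdot|S|$ elements since for each $h$ there are $|S|$ choices of $g$. On the other hand, fixing $g$, the number of $h$ with $g \in hS$ equals $|S|$ as well (namely $h \in gS^{-1}$), so every coordinate $g\in G$ is hit by exactly $|S|$ of the translates. Now restrict attention to the support $T$ of the ideal $C' = Rc$, i.e. $T = \bigcup_{h} hS = \{g \mid \exists h,\ (\overline{h}c)_g \neq 0\}$; by the incidence count, $|G|\cdot|S| = \sum_{g\in T}(\text{number of translates supported at }g)$, and this is at most $|T|\cdot|S|$ only trivially, so I need a sharper idea. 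The correct move, as in the group-code case, is: the projection of $C'$ onto the coordinates indexed by any single coset-type block is either zero or injective-ish; more precisely one shows that $\dim_K C' \le |T|/|S| \cdot$ something — let me instead recall the actual mechanism. One decomposes $G$ (or rather $T$) into the orbits of a suitable action, or uses that the $K$-span of $\{\overline{h}c\}$ has dimension at most $\dim C$ while its support $T$ satisfies $|T| \le |G|$, and each basis vector contributes at least $\dist(C)$ nonzero entries spread over $T$. Summing the weights of a $K$-basis $b_1,\dots,b_m$ of $C'$ (with $m = \dim_K C' \le \dim_K C$), one gets $\sum_i \wt(b_i) \ge m\cdot \dist(C)$; on the other hand, after row-reducing to a basis in "echelon" form relative to the coordinates $T$, the total number of nonzero entries is at most $|T| + (\text{lower-order terms})$, and a careful accounting gives $\sum_i \wt(b_i) \le |T| \le |G|$ — wait, that would give the bound only for $C'$ of dimension $1$.

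Let me state the step I actually expect to carry the argument: take $C'=Rc$ with $\dim_K C' = m$. Since $C'$ is a left $R$-module generated by one element and $R$ acts on the coordinates by (twisted) permutations, the support $T=\supp(C')$ is a union of at most $m$ "pieces" of size $\dist(C)$ in a way that forces $|T| \le m\cdot\dist(C)$; combined with $C' \subseteq C$ so $m \le \dim C$, and with $|T| = |G|$ when $C$ is "full-support" — but in general $|T|$ could be smaller, which only helps. More carefully: every nonzero $x \in C'$ has $\supp(x)\subseteq T$ and $\wt(x)\ge \dist(C)$; choosing a $K$-basis of $C'$ in reduced echelon form with respect to an ordering of $T$ yields leading coordinates that are distinct, so there are $m$ distinct elements of $T$, and each basis vector has weight $\ge \dist(C)$, hence by a standard double-count (each of the $|T|\le|G|$ coordinates lies in at most ... ) one obtains $m\cdot\dist(C) \le |T| \le |G|$. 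Then $\dim C \ge m \ge |G|/\dist(C)$, i.e. $|G| \le \dist(C)\cdot\dim C$, as desired. The main obstacle is making the "support is covered at most $|G|/\dist(C)$ times by a basis" step rigorous in the twisted skew setting: one must check that multiplication by $\overline{h}$ genuinely permutes coordinates up to nonzero scalars and field automorphisms (true by \eqref{eq:proddef} since $\alpha(h,g)\neq 0$ and $\Theta(h)\in\Aut(K)$), so that $K$-dimensions and supports transform predictably; once that is in hand, the combinatorial counting is identical to the group-algebra proof in \cite{BWZ22}, with $\Theta$ and $\alpha$ playing no further role.
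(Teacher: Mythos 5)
There is a genuine gap. You set up the right objects---a minimum-weight word $c$ with support $S$, the observation that $\overline{h}c$ has support exactly $hS$ because $\Theta(h)\in\Aut(K)$ and $\alpha(h,g)\neq 0$, and the reduction to the principal ideal $Rc$---and you are right that this is the only place where the twisting matters. But the core of the argument, namely producing at least $|G|/\dist(C)$ linearly independent elements of $C$, is never carried out. Your incidence count over $\{(h,g): g\in hS\}$ gives $|G|\cdot|S|$ on both sides and hence no information, as you notice yourself. The echelon-form count only yields $m\leq |T|$ (distinct pivots), not $m\cdot\dist(C)\leq |T|$: non-pivot coordinates can be shared by all basis vectors, so the sum of their weights is not bounded by $|T|$. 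Moreover the inequality $m\cdot\dist(C)\leq|G|$ is false in general (a $[7,4,3]$ Hamming code has $4\cdot 3>7$), and even if it held it points the wrong way: from $m\cdot\dist(C)\leq|G|$ you cannot conclude $m\geq |G|/\dist(C)$, which is what the theorem requires.

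The missing idea is a greedy choice of translates. Pick $g_1,\dots,g_t\in G$ maximal with the property that $g_iS\not\subseteq\bigcup_{j<i}g_jS$ for each $i$. Maximality forces $G=\bigcup_{i=1}^t g_iS$, hence $t\geq |G|/|S|$. The same ``fresh point'' property gives linear independence of $T_c(\overline{g_1}),\dots,T_c(\overline{g_t})$, where $T_c(a)=ac$: if $\sum_i\lambda_i\,\overline{g_i}\,c=0$ and $r$ is the largest index with $\lambda_r\neq 0$, choose $h=g_rs\in g_rS\setminus\bigcup_{j<r}g_jS$; the coefficient of $\overline{h}$ in the sum is $\lambda_r(\Theta(g_r)c_s)\alpha(g_r,s)\neq 0$, a contradiction. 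Since $\rank T_c=\dim Rc\leq\dim C$, this yields $|G|/\dist(C)\leq t\leq\dim C$. This triangularity argument is exactly what replaces the double counts you were reaching for, and it is where the lower bound on the number of independent translates actually comes from.
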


\begin{proof}
 We obviously may replace $C$ by $C = Rc$ with $\wt(c) = \dist(C)$. Let $c = \sum_{g \in G} c_g\overline{g}$
and put $S = \supp(c) =\{g \mid c_g \not= 0\} \subseteq G$. Let $g_1, \ldots,g_t$ be a maximal set of pairwise different elements of $G$ such that
$$ g_iS \not\subseteq \cup_{j<i} \,  g_jS$$ for all $i=1,\ldots,t$. By maximality, we have $G= \cup_{i=1}^t g_iS$, hence $t \geq\frac{|G|}{|S|}$.
Next we consider the map
$T_c: R \longrightarrow R$ defined by $a \mapsto T_c(a) = ac$. Note that $T_c$ is $K$-linear from the left.
Since $\rank T_c = \dim C$, it is sufficient to prove that
for $i=1,\ldots,t$ the vectors $T_c(\overline{g_i})$ are $K$-linear independent. Suppose that
$v=\sum_{i=1}^t \lambda_i T_c(\overline{g_i}) = 0$ with $\lambda_i\in K$. Let $r =\max\{i \mid \lambda_i \not= 0\}$.
If
$$ h = g_rs \in g_rS \setminus \cup_{j<r} g_jS $$ with $s \in S,$
then the coefficient of $\overline{h}$ in $v$ is equal to
$$ \lambda_r(\Theta(g_r) c_s)\alpha(g_r,s) \not= 0, $$
a contradiction.
\end{proof}

Group codes for which equality holds in the bound in Theorem
\ref{bound} have been characterized completely in \cite{BWZ22}.

\section{Equivalence and automorphism groups}

Let $M(n,K)$ (resp., ${\rm Diag}(n,K)$) be the set of monomial (resp., diagonal) matrices  of type $n \times n$ over $K$. The group ${\rm Aut}(K)$ acts naturally from the left on $M(n,K)$ by  $\gamma \cdot A=(\gamma(a_{i,j}))$ for $\gamma \in {\rm Aut}(K)$ and $A=(a_{i,j}) \in M(n,K)$.  In the sequel, for convenience, we denote $\gamma \cdot A$ by $A^{\gamma^{-1}}$. In this way, we get the semidirect product $\Gamma {\rm L}_n= {\rm Aut}(K) \ltimes M(n,K)$ with multiplication 
$$
(\gamma,A)\cdot (\beta,B)=(\gamma\beta, A^{\beta}B)
$$
for $(\gamma,A),(\beta,B) \in \Gamma {\rm L}_n$.
In this section, we regard elements of $K^n$ as column vectors. The following defines a left action of $\Gamma {\rm L}_n$ on $(K^n,+)$. For $x=(x_1,\dots,x_n)^T \in K^n$ and $(\gamma,A) \in M(n,K)$, we put
$$
(\gamma,A) \cdot x=A^{\gamma^{-1}}x^{\gamma^{-1}},
$$ where $x^{\gamma^{-1}}=(\gamma(x_1),\gamma(x_2),\dots,\gamma(x_n))^T$. 
{ Observe that this map is semilinear since, for $k \in K$, we have
$$ (\gamma,A)\cdot (kx) = \gamma(k)\cdot ((\gamma,A)x). $$}
From now on, let $\Pi_1$ (resp., $\Pi_2$) be the projection of $\Gamma {\rm L}_n$ on ${\rm Aut}(K)$ (resp., $M(n,K)$).

\begin{lem}\label{lem:embSn}
The action of $\Gamma {\rm L}_n$ on $(K^n,+)$ is faithful and (Hamming-)distance preserving. Hence we may identify $\Gamma {\rm L}_n$ as a subgroup of ${\rm Aut}(K^n,+)$.
\end{lem}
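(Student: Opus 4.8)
The plan is to verify the two claimed properties of the action directly from the definition, and then to observe that faithfulness together with the fact that the action is by automorphisms of $(K^n,+)$ yields the identification.

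First I would check that the map really is a left action of $\Gamma\mathrm{L}_n$ on $(K^n,+)$. This amounts to unwinding the formula $(\gamma,A)\cdot x=A^{\gamma^{-1}}x^{\gamma^{-1}}$ against the product rule $(\gamma,A)\cdot(\beta,B)=(\gamma\beta,A^\beta B)$: applying $(\beta,B)$ and then $(\gamma,A)$ should give the same as applying $(\gamma\beta,A^\beta B)$, which reduces to the identity $\gamma(\beta(x_i))$-bookkeeping and the compatibility $(A^\beta B)^{(\gamma\beta)^{-1}} = A^{\gamma^{-1}} (B^{\beta^{-1}})^{\gamma^{-1}}$, a routine manipulation of the $\mathrm{Aut}(K)$-action on matrices. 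Each $(\gamma,A)$ is additive on $K^n$ because $A^{\gamma^{-1}}$ acts linearly and $x\mapsto x^{\gamma^{-1}}$ is additive; hence each group element acts as an element of $\mathrm{Aut}(K^n,+)$.

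Next I would prove that the action is distance preserving. Since monomial matrices permute coordinates up to nonzero scalars and the coordinatewise application of $\gamma\in\mathrm{Aut}(K)$ sends nonzero entries to nonzero entries, the support of $(\gamma,A)\cdot x$ is the image of $\supp(x)$ under the underlying permutation of $A$; in particular $\wt((\gamma,A)\cdot x)=\wt(x)$, and by additivity $\dist((\gamma,A)\cdot x,(\gamma,A)\cdot y)=\wt((\gamma,A)\cdot(x-y))=\wt(x-y)=\dist(x,y)$.

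For faithfulness, suppose $(\gamma,A)$ fixes every $x\in K^n$. Testing on the standard basis vectors $e_j$ forces $A$ to be diagonal and, by comparing $(\gamma,A)\cdot(k e_j)=\gamma(k)\cdot((\gamma,A)e_j)$ with $k e_j$ for all $k\in K$, forces $\gamma=\mathrm{id}$ and every diagonal entry of $A$ to be $1$, so $(\gamma,A)$ is the identity of $\Gamma\mathrm{L}_n$. Combining this with the previous paragraphs, the action realizes $\Gamma\mathrm{L}_n$ as a subgroup of $\mathrm{Aut}(K^n,+)$ consisting of distance-preserving maps. I do not expect a genuine obstacle here; the only mildly delicate point is keeping the convention $A^{\gamma^{-1}}$ versus $\gamma\cdot A$ consistent throughout the computation verifying that we have a left (rather than right) action.
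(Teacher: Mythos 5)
Your proof is correct and follows essentially the same route as the paper: the faithfulness argument via the standard basis vectors $e_j$ is exactly the paper's proof, and the action axioms and distance preservation that you verify explicitly are the parts the paper dismisses as ``classical.'' (Minor remark: testing on the $e_j$ already forces $A^{\gamma^{-1}}=\mathrm{I}_n$, hence $A=\mathrm{I}_n$, not merely $A$ diagonal; your subsequent semilinearity step with $ke_j$ still recovers the full conclusion.)
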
 
\begin{proof}
The proof is classical. Consider the standard $K$-basis $\{e_1, \dots,e_n\}$ of $K^n$. Let $(\gamma,A) \in \Gamma {\rm L}_n$ such that $(\gamma,A)x=x$, for all $x \in K^n$. Taking $x=e_j$ ($j=1,\dots,n$), we deduce that $A^{\gamma^{-1}}e_j=e_j$, and so $A^\gamma={\rm I}_n$, or equivalently $A={\rm I}_n$.  Therefore, $x^{\gamma^{-1}}=x$ for all $x \in K^n$, and so $\gamma={\rm Id}_K$.
\end{proof}

\begin{lem}\label{lem:semilinear}
 $\Gamma {\rm L}_n$  is the group of semilinear isometries of the Hamming space $(K^n,+,d)$.
\end{lem}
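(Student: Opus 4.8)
The plan is to show a double inclusion: every element of $\Gamma\mathrm{L}_n$ is a semilinear isometry of $(K^n,+,d)$ (this is essentially Lemma~\ref{lem:embSn} together with the semilinearity observation already recorded above), and conversely every semilinear isometry of $(K^n,+,d)$ lies in $\Gamma\mathrm{L}_n$. The first inclusion is immediate from what precedes, so the content is the reverse inclusion. So let $f : K^n \to K^n$ be a bijective additive map which is semilinear, say $f(kx) = \sigma(k) f(x)$ for some field automorphism $\sigma \in \mathrm{Aut}(K)$ and all $k \in K$, $x \in K^n$, and which preserves the Hamming distance (equivalently, since $f$ is additive and $d$ is translation-invariant, $\wt(f(x)) = \wt(x)$ for all $x$).

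First I would reduce to the linear case by composing with the coordinatewise action of $\sigma^{-1}$: the map $g := (\sigma^{-1}, \mathrm{I}_n)^{-1} \circ f$, i.e.\ $g(x) = (f(x))^{\sigma}$ applied coordinatewise in the right order, is then a genuine $K$-linear bijection of $K^n$ that still preserves weights, because the coordinatewise Galois action is a weight-preserving bijection. It therefore suffices to prove the classical fact that a $K$-linear weight-preserving bijection of $K^n$ is a monomial matrix; then $f = (\sigma, \mathrm{I}_n)\cdot(\mathrm{Id}_K, P)$ for a monomial $P$, which is an element of $\Gamma\mathrm{L}_n$ after collecting the semidirect-product bookkeeping, and we are done.

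For the linear step I would argue via the weight-one vectors. A nonzero $x \in K^n$ has $\wt(x) = 1$ if and only if $x$ spans a coordinate line $K e_i$; and the lines $Ke_1,\dots,Ke_n$ are exactly the $1$-dimensional subspaces consisting entirely of vectors of weight $\le 1$. Since $g$ is a linear bijection preserving weights, it permutes the set of weight-one vectors, hence permutes the $n$ coordinate lines, so there is a permutation $\pi \in \Sym_n$ and scalars $\lambda_i \in K^*$ with $g(e_i) = \lambda_i e_{\pi(i)}$; thus the matrix of $g$ is monomial. (One should check $g$ actually permutes the $e_i$-lines rather than collapsing them, which is automatic from bijectivity plus dimension count, and that every coordinate line is hit, again by bijectivity.) Reassembling, $f$ equals the $\Gamma\mathrm{L}_n$-element corresponding to $(\sigma, \Lambda P_\pi)$ with $\Lambda$ diagonal, which proves $f \in \Gamma\mathrm{L}_n$.

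The main obstacle, such as it is, is purely organizational rather than mathematical: one must be careful with the twisted multiplication $(\gamma,A)\cdot(\beta,B) = (\gamma\beta, A^\beta B)$ and the convention $(\gamma,A)\cdot x = A^{\gamma^{-1}} x^{\gamma^{-1}}$ so that the decomposition $f = (\sigma,\mathrm{I}_n)\cdot(\mathrm{Id}_K,P)$ comes out with the Galois twist on the correct side and $P$ genuinely monomial in the stated sense; a sign/inverse slip here would put the automorphism on the wrong factor. The only genuine lemma needed — linear weight-preserving bijections are monomial — is the standard MacWilliams-type equivalence fact and can be cited or proved in two lines as above. I would also explicitly invoke the semilinearity identity displayed just before the statement to justify that composing $f$ with a coordinatewise Galois action lands in the $K$-linear world.
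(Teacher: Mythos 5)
Your proof is correct and takes essentially the same route as the paper's: the paper applies the weight-one argument directly to the semilinear map $f$ (since $f$ is additive and weight-preserving, $f(e_i)$ has weight one, hence $f(e_i)=\gamma(k_i)e_{\pi(i)}$ with $\pi$ a permutation by bijectivity), whereas you first strip off the coordinatewise Galois twist to reduce to a $K$-linear weight-preserving bijection and then make the same observation. Both arguments rest on the identical key fact that weight-preserving (semi)linear bijections permute the coordinate lines, so there is nothing substantive to add.
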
 
\begin{proof} Let $f$ be a semilinear
bijection of $K^n$, i.e.,
$$f(x+y) = f(x) +f(y) \quad \text{and} \quad f(kx)=\gamma(k)f(x) $$ for $x \in K^n$ and $k \in K$ where $\gamma \in \Aut(K)$. Assume also that $f$ is an isometry of $(K^n,+,d)$. We only have to show that $f \in \Gamma {\rm L}_n.$
If $e_1, \ldots, e_n$ is the standard basis of $K^n$, then
$$ f(e_i) = \gamma(k_i)e_{\pi(i)}$$
for suitable $k_i\in K^*$, where $\pi$ is a permutation of $\{1,\ldots,n\}$. Thus,
for $x_i\in K$, we get
$$f\left(\sum_{i=1}^n x_i e_i\right) = \sum_{i=1}^n \gamma(x_i)f(e_i) = \sum_{i=1}^n \gamma(x_ik_i) e_{\pi(i)} = \sum_{i=1}^n \gamma(x_{\pi^{-1}(i)}k_{\pi^{-1}(i)}) e_i.$$ This obviously describes a map
in $\Gamma {\rm L}_n$. 
\end{proof}

Recall that for left ideals $C \leq  R=K[G,\Theta,\alpha]$ the left multiplication of $g \in G$  is given by
$$ g\cdot(kc) = (\Theta(g)k)(g\cdot c)$$
for $k\in K$ and $c \in C$, hence by a semilinear map.
Thus $C$ is invariant under a group of suitable semilinear maps and we may define the {\it automorphism group} of $C$ as
 
$$
{\rm Aut}(C)=\{ (\gamma,A) \in \Gamma {\rm L}_n \mid (\gamma,A)\cdot C =C  \},
$$ which is clearly a subgroup of $\Gamma {\rm L}_n$. Note that a monomial matrix $A \in M(n,K)$ over $K$ is of the form
$$
A={\rm diag}(a_1(A),\dots,a_n(A))P(A), 
$$ where ${\rm diag}(a_1(A),\dots,a_n(A))$ is the diagonal matrix with entries $a_i(A) \in K^*$ and $P(A)$ is the permutation matrix of the permutation induced by $A$ on $\{1,\dots,n\}$. We call the diagonal matrix $D(A)={\rm diag}(a_1(A),\dots,a_n(A))$ the {\it diagonal part} and $P(A)$ the {\it permutation part} of $A$.

\begin{defi} Let $C \leq K[G,\Theta,\alpha]$ and $C' \leq K[G',\Theta',\alpha']$ with $|G|=|G'|=n$. Then we call $C$ and $C'$
{\it equivalent} if there exists $(\gamma,A) \in {\Gamma {\rm L}}_n$ such that 
$(\gamma,A)C = C'.$ 
\end{defi}

\begin{rk} Note that the map
$K[G,\Theta,\alpha] \ni a \mapsto 
\widehat{a} \in K[G,\Theta,\alpha^{-1}]$ can be described by a suitable $(\gamma,A) \in {\Gamma {\rm L}}_{|G|}$, but it maps a left ideal onto a right ideal (see Theorem \ref{antiiso}).
\end{rk}

Now let
$$
G(\Theta,\alpha)=K^* \times G.
$$ We define a group structure on $G(\Theta,\alpha)$ by:
$$
(\lambda,g)\cdot (\mu,h)=(\lambda\Theta(g)(\mu)\alpha(g,h),gh),
$$
for all $(\lambda,g),(\mu,h) \in G(\Theta,\alpha)$. Note that
$$
N=\{(\lambda,1) \mid \lambda \in K^* \} \simeq K^*
$$ is a normal subgroup of $G(\Theta,\alpha)$ and $G(\Theta,\alpha)/N \simeq G$.
On $(R,+)$, the group $G(\Theta,\alpha)$ acts faithfully in the following way: for $(\lambda,g) \in G(\Theta,\alpha)$, $\eta \in K$ and $h \in G$, we define
$$
(\lambda,g)\cdot(\eta\overline{h})=\lambda\Theta(g)(\eta)\alpha(g,h)\overline{gh}.
$$ 
With respect to the $K$-basis $\{ \overline{g}\}_{g \in G}$ of $R=K[G,\Theta,\alpha]$, we may identify $G(\Theta,\alpha)$ with a subgroup in 
$\Gamma{\rm L}_n$.

From now on, if $u \in \Gamma{\rm L}_n$, we let $P(u)=P(\Pi_2(u))$, the permutation part of $\Pi_2(u)$. 

\begin{theo}
Let $C \leq K^n$ be a linear code. Then $C$ is a twisted skew group code in $K[G,\Theta,\alpha]$ for a suitable $\Theta$ and a cocycle $\alpha$ of $G$ if and only if there exists a subgroup $\widetilde{G} \leq {\rm Aut}(C)$ satisfying the following three conditions.
\begin{itemize}
\item[\rm a)] The only diagonal matrices in $\widetilde{G} \cap M(n,K)$ are scalar matrices.
\item[\rm b)] $G=P(\widetilde{G})$ acts regularly on $\{1,\dots,n\}$.
\item[\rm c)] If $u,v \in \widetilde{G}$ such that $P(u)=P(v)$, then $\Pi_1(u)=\Pi_1(v)$.
\end{itemize}
\end{theo}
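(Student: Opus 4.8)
The plan is to prove both directions of the equivalence, with the forward direction being a direct consequence of the machinery already set up.

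\textbf{Forward direction.} Suppose $C \leq K[G,\Theta,\alpha]$ is a twisted skew group code, i.e., a left ideal. I would take $\widetilde{G}$ to be the image of $G(\Theta,\alpha)$ under its identification with a subgroup of $\Gamma{\rm L}_n$ described just before the theorem. First, I would observe that $\widetilde{G} \leq {\rm Aut}(C)$: for $c = \sum_h c_h\overline{h} \in C$ and $(\lambda,g) \in G(\Theta,\alpha)$, one computes $(\lambda,g)\cdot c = \sum_{h} \lambda\Theta(g)(c_h)\alpha(g,h)\overline{gh} = (\lambda\overline{g})\cdot c \in C$ since $C$ is a left ideal. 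For condition (a), I would compute explicitly the matrix $M_{(\lambda,g)}$ of the action of $(\lambda,g)$ with respect to the basis $\{\overline{g}\}_{g\in G}$: its permutation part is left translation by $g$, so $M_{(\lambda,g)}$ is diagonal only if $g = 1$, in which case $(\lambda,1)$ acts as the scalar $\lambda$. For (b), the permutation part of $M_{(\lambda,g)}$ is left translation by $g$ on $G$, so $P(\widetilde{G}) = G$ acting regularly (identifying $\{1,\dots,n\}$ with $G$). For (c), if $P(u) = P(v)$ with $u = M_{(\lambda,g)}$, $v = M_{(\mu,g')}$, then $g = g'$, hence $\Pi_1(u) = \Theta(g) = \Theta(g') = \Pi_1(v)$.

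\textbf{Converse direction.} Suppose $\widetilde{G} \leq {\rm Aut}(C)$ satisfies (a), (b), (c). By (b), fix a bijection between $\{1,\dots,n\}$ and a group $G$ so that $P$ restricted to $\widetilde{G}$ gives an isomorphism $P\colon \widetilde{G}/(\widetilde{G}\cap{\rm Diag}(n,K)) \xrightarrow{\sim} G$; by (a), $\widetilde{G}\cap{\rm Diag}(n,K) = K^*{\rm I}_n$ (one should check it is all of $K^*{\rm I}_n$ by noting scalars fix $C$ — actually one only needs the inclusion into scalars, and adjusts $\widetilde{G}$ to contain all scalars if desired, or works with what is given). For each $g \in G$ pick a preimage $u_g \in \widetilde{G}$ with $P(u_g) = $ (translation by $g$); by (c) the semilinear part $\Theta(g) := \Pi_1(u_g)$ is well-defined independent of the choice, and since $P$ and $\Pi_1$ are homomorphisms on $\widetilde{G}$ modulo diagonals, $\Theta \in \ho(G,\Aut(K))$. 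The product $u_g u_h$ and $u_{gh}$ have the same permutation part, so $u_g u_h = d(g,h)\, u_{gh}$ for a unique scalar; reading off $\Pi_2$ and using that $u_g$ has diagonal part recording values $\alpha(g,\cdot)$, one extracts a cocycle $\alpha \in \ZZ^2(G,K^*)$ satisfying \eqref{eq1}. Then one identifies $K^n$ with $K[G,\Theta,\alpha]$ via the chosen bijection on coordinates (absorbing the diagonal parts of the $u_g$ into the identification, as in Lemma~\ref{lem:isogroupcod}), and checks that under this identification $C$ becomes invariant under left multiplication by each $\overline{g}$, hence a left ideal.

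\textbf{Main obstacle.} The delicate part is the converse: extracting a genuine $2$-cocycle from the family $\{u_g\}$ and verifying the identification $K^n \cong K[G,\Theta,\alpha]$ is an isometry carrying $C$ to a left ideal. Concretely, the diagonal parts $D(u_g)$ are not canonically normalized, so one must carefully choose the identification of coordinates (equivalently, a coordinate-wise rescaling $\kappa$) to make $\alpha$ normalized and to make the action match \eqref{eq:proddef}; the associativity/cocycle condition for $\alpha$ then falls out of associativity in $\widetilde{G}$ together with condition (c), and \eqref{eq1} follows because the $\Theta(g)$ act on the scalars $d(g,h)$ compatibly. Lemma~\ref{lem:isogroupcod} is the right tool to absorb the rescaling, and Lemma~\ref{lem:semilinear} guarantees that the resulting coordinate map lies in $\Gamma{\rm L}_n$, so that $C$ is genuinely a twisted skew group code up to the equivalence already defined.
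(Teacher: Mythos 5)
Your forward direction matches the paper's: take $\widetilde{G}$ to be the image of $G(\Theta,\alpha)$ in $\Gamma{\rm L}_n$, use the left-ideal property to see $(\lambda,g)\cdot c=(\lambda\overline{g})\cdot c\in C$, and read off a)--c) from the explicit matrices. That part is fine. In the converse you have the right skeleton (lifts $u_g$, $\Theta(g):=\Pi_1(u_g)$ well defined by c), $u_gu_h=d(g,h)u_{gh}$, then an identification of $K^n$ with $K[G,\Theta,\alpha]$), but you stop exactly where the work is: you yourself flag that the diagonal parts $D(u_g)$ are not canonically normalized and that ``one must carefully choose'' a rescaling $\kappa$ to make $\alpha$ normalized and the action match \eqref{eq:proddef}. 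That choice is the content of the proof, not a remark after it. The paper's resolution is a normalization you never state: conditions a) and c) imply that for each $g\in G$ there is a \emph{unique} $\widetilde{g}\in\widetilde{G}$ with $P(\widetilde{g})=g$ and $\widetilde{g}e_1=e_g$ (two lifts of the same $g$ have the same $\Pi_1$-part by c), so they differ by an element of $\widetilde{G}\cap{\rm Diag}(n,K)$, which by a) is a scalar, and the scalar is pinned down by the condition on $e_1$). With this canonical section everything becomes mechanical: $\widetilde{g}\widetilde{h}=({\rm Id}_K,\alpha(g,h){\rm I}_n)\widetilde{gh}$ defines $\alpha$, one gets $\widetilde{g}e_h=\alpha(g,h)e_{gh}$, normalization $\alpha(1,g)=\alpha(g,1)=1$ is automatic, and the identification $\Psi(e_g)=\overline{g}$ needs no rescaling at all, so Lemma~\ref{lem:isogroupcod} is not required. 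Without some such pinning-down your $\alpha$ is only defined up to a coboundary-type ambiguity and the claim that the $\widetilde{G}$-action matches left multiplication in $K[G,\Theta,\alpha]$ is unverified.

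A second, smaller point: you assert that \eqref{eq1} ``follows because the $\Theta(g)$ act on the scalars $d(g,h)$ compatibly.'' What actually falls out of associativity in $\widetilde{G}$ is the twisted identity $\alpha(g,h)\alpha(gh,l)=\Theta(g)(\alpha(h,l))\alpha(g,hl)$ --- which is precisely what the paper verifies and what associativity of the multiplication \eqref{eq:proddef} requires --- not the stabilization condition $\Theta(g)\alpha(x,y)=\alpha(x,y)$ itself. Conjugating the scalar $\widetilde{h}\widetilde{l}\,\widetilde{hl}^{-1}$ by $\widetilde{g}$ only shows that $\Theta(g)(\alpha(h,l)){\rm I}_n$ again lies in $\widetilde{G}$, which gives no constraint. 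So either drop that sentence or supply an actual argument; as written it is an unjustified assertion at a point where the definitions in Section~1 genuinely depend on \eqref{eq1}.
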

\begin{proof} To prove the if part we put
$\widetilde{G}=\xi(G(\Theta,\alpha))$.
Let $c=\sum_h c_h \overline{h} \in C$ and $(\lambda,g) \in G(\Theta,\alpha)$. Then
\begin{align*}
(\lambda,g)\cdot c&=\sum_{h \in G} \lambda \Theta(g)(c_h)\alpha(g,h)\overline{gh}\\
&= (\lambda \overline{g}) \cdot \left( \sum_{h \in G} c_h \overline{h} \right) =(\lambda \overline{g})\cdot c   \in C,
\end{align*} since $C$ is a left ideal in $K[G,\Theta,\alpha]$. Thus, we get $\widetilde{G}=\xi(G(\Theta,\alpha)) \leq {\rm Aut}(C)$ and the three conditions are
easily checked.

To prove the only if part, let $\widetilde{G} \leq {\rm Aut}(C)$ as in the statement. Let $B=\{e_i \mid i=1, \dots, n \}$ be the standard basis of $K^n$. Since $G$ acts regularly on $\{1,\dots,n\}$ we may identify $i \in \{1,\dots,n\}$ with the unique element $h \in G$ such that $h \cdot 1 =i $. Thus we can write $B=\{e_g \mid g \in G\}$. For any $g \in G$, note that $g \cdot i = g\cdot( h \cdot 1)=(gh)\cdot 1$ and so $g \cdot i$ corresponds to $gh$. Furthermore note that if $\widetilde{g} \in \widetilde{G}$ such that $P(\widetilde{g})=g$, then there exists $\beta \in K^*$ such that $\widetilde{g}\cdot e_1= \beta e_g$. 

{\bf Claim:} For any $g \in G$, there exists a unique $\widetilde{g} \in \widetilde{G}$ such that $P(\widetilde{g})=g$ and $\widetilde{g}e_1=e_g$. Indeed, let $\widetilde{g}_1,\widetilde{g}_2 \in \widetilde{G}$ such that $P(\widetilde{g}_1)=P(\widetilde{g}_2)=g$. By condition c), we may write $\widetilde{g}_1=(\gamma,D_1\cdot g)$ and $\widetilde{g}_2=(\gamma,D_2\cdot g)$ for some $\gamma \in {\rm Aut}(K)$ and some diagonal matrices $D_1,D_2$. Thus we have
$$
\widetilde{g}_1\widetilde{g}_2^{-1}=(\gamma,D_1\cdot g)(\gamma^{-1},(D_2^{\gamma^{-1}} \cdot g^{\gamma^{-1}})^{-1})=({\rm Id}_K,D_1^{\gamma^{-1}}(D_{2}^{\gamma^{-1}})^{-1}) \in \widetilde{G}.
$$ By condition a),
 we get $\widetilde{g_2}=({\rm Id}_K,\lambda{\rm I}_n)\widetilde{g}_1$ for some $\lambda \in K^*$.  Thus the claim follows. In the sequel, we denote by $\Theta(g)$ the unique element $\gamma$ as defined above.

Now let $g,h \in G$. Since $P(\widetilde{g}\widetilde{h})=P(\widetilde{gh})=gh$, there exists a unique $\alpha(g,h) \in K^*$ such that 
\begin{equation}\label{eq:relfjf}
\widetilde{g}\widetilde{h}=({\rm Id}_K,\alpha(g,h){\rm I}_n)\widetilde{gh},
\end{equation} by the proof of the Claim.

Moreover, $\widetilde{g}e_h=\alpha(g,h)e_{gh}$. Indeed, we have 
\begin{align*}
\widetilde{g}e_h=\widetilde{g}(\widetilde{h}e_1)=(\widetilde{g}\widetilde{h})e_1
&=({\rm Id}_K,\alpha(g,h){\rm I}_n)\widetilde{gh}e_1 \quad (\text{using}\,\, \eqref{eq:relfjf})\\
&=({\rm Id}_K,\alpha(g,h){\rm I}_n)e_{gh}=\alpha(g,h)e_{gh}.
\end{align*}

From now on, let $\Theta: g \in G \mapsto \Theta(g) \in {\rm Aut}(K)$, where $\Theta(g)$ is defined as in the proof of teh Claim. First, note that for all $g,h \in G$ and $\lambda \in K$, we have 
$$\widetilde{g}\cdot(\lambda e_h)=\Theta(g)(\lambda)(\widetilde{g}e_h).\footnote{Indeed, there exists a diagonal matrix $D$ such that $\widetilde{g}=(\Theta(g),D\cdot g)$. Then we have $\widetilde{g}(\lambda e_h)= D^{\Theta(g^{-1})} \cdot g^{\Theta(g^{-1})}(\lambda e_h)^{\Theta(g^{-1})}=\lambda^{\Theta(g^{-1})}D^{\Theta(g^{-1})} \cdot g^{\Theta(g^{-1})}(e_h)^{\Theta(g^{-1})}=\Theta(g)(\lambda)(\widetilde{g}e_h)$.}
$$ Next we claim that $\Theta$ is a morphism of groups. For this purpose, let $g,h \in G$ and $\lambda \in K$. We have
\begin{align*}
\widetilde{g}\cdot(\widetilde{h}\cdot( \lambda  e_1))=\widetilde{g}\cdot (\Theta(h)(\lambda)e_{h})&=(\Theta(g)\circ \Theta(h))(\lambda)\widetilde{g}e_h\\
&=(\Theta(g)\circ \Theta(h))(\lambda)\alpha(g,h)e_{gh}
\end{align*}
and
\begin{align*}
\widetilde{g}\cdot (\widetilde{h} \cdot (\lambda e_1))=(\widetilde{g} \cdot 
\widetilde{h})\cdot (\lambda e_1)&=(({\rm Id}_K,\alpha(g,h){\rm I}_n)\widetilde{gh}) \cdot (\lambda e_1)\qquad \text{(using  \eqref{eq:relfjf} again})\\
&=({\rm Id}_K,\alpha(g,h){\rm I}_n)(\Theta(gh)(\lambda)(\widetilde{gh}e_1))
\\
&=\alpha(g,h)\Theta(gh)(\lambda)e_{gh}.
\end{align*}
Thus $\Theta(g) \circ \Theta(h)=\Theta(gh)$. Since $P(({\rm Id}_K,{\rm I}_n))=1$ and $({\rm Id}_K,{\rm I}_n)e_1=e_1$, we conclude, by applying the Claim, that $\widetilde{1}=({\rm Id}_K,{\rm I}_n)$ and so $\Theta(1)={\rm Id}_K$. Consequently, $\Theta$ is a morphism of groups.

Finally  we show that $\alpha$ is a cocycle, that is,
$$
\alpha(gh,l)\alpha(g,h)=\Theta(g)(\alpha(h,l))\alpha(g,hl),
$$ for all $g,h,l \in G$.
This follows from
$$
(\widetilde{g} \widetilde{h}) e_l=\widetilde{g}(\widetilde{h}e_l)=\widetilde{g}(\alpha(h,l)e_{hl})=\Theta(g)(\alpha(h,l))(\widetilde{g}e_{hl})=\Theta(g)(\alpha(h,l))\alpha(g,hl)e_{ghl}.
$$
and
$$(\widetilde{g} \widetilde{h}) e_l=(({\rm Id}_K,\alpha(g,h){\rm I}_n)\widetilde{gh})e_l=({\rm Id}_K,\alpha(g,h){\rm I}_n)(\alpha(gh,l)e_{ghl})= \alpha(g,h)\alpha(gh,l)e_{ghl}.$$

Furthermore $\alpha$ is normalized. To prove this let $g \in G$. Since $\widetilde{1}=({\rm Id}_K,{\rm I}_n)$ we get 
$$
e_g=\widetilde{1}e_g=\alpha(1,g)e_g
$$
and
$$
e_g=\widetilde{g}e_1=\alpha(g,1)e_g,
$$ and so $\alpha(1,g)=\alpha(g,1)=1$.

To the final end we consider the $K$-linear isomorphism
$$
\Psi : K^n \rightarrow K[G,\Theta,\alpha]
$$ defined by
$\Psi(e_g)=\overline{g}$. Let $c=\sum_{h}c_h e_h \in C$, $\beta \in K$ and $g \in G$. We have
\begin{align*}
\beta\overline{g}\Psi(c)&=\beta \overline{g}\left( \sum_h c_h \overline{h}\right)\\
&=\beta\sum_h\Theta(g)(c_h)\alpha(g,h)\overline{gh}\\
&=\Psi\left(\beta\sum_h\Theta(g)(c_h)\alpha(g,h)e_{gh}\right)\\
&=\Psi\left(({\rm Id}_K,\beta{\rm I}_n)\widetilde{g}\cdot c\right) \in \Psi(C)\quad \text{since}\,\, \widetilde{G}\cdot C\leq C.
\end{align*} Hence $\Psi(C)$ is a left ideal of $K[G,\Theta,\alpha]$, which completes the proof.
\end{proof}

\begin{coro}
Let $C \leq K^n$ be a linear code. Then $C$ is a skew group code in $K[G,\Theta,1]$ for a suitable $\Theta$ if and only if there exists a subgroup $\widetilde{G} \leq {\rm Aut}(C)$ satisfying the following two conditions.
\begin{itemize}
\item[\rm a)] $G=P(\widetilde{G})$ acts regularly on $\{1,\dots,n\}$.
\item[\rm b)] If $u,v \in \widetilde{G}$ such that $P(u)=P(v)$, then $\Pi_1(u)=\Pi_1(v)$.
\end{itemize}
\end{coro}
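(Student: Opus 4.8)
The plan is to derive the statement from the preceding Theorem, viewing a skew group code as the $\alpha=1$ specialisation of a twisted skew group code; the only extra point is that in the untwisted case the $2$-cocycle produced by the Theorem's construction can be normalised to be identically $1$.

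For the ``only if'' direction, let $C$ be a left ideal of $R=K[G,\Theta,1]$. Since $\alpha\equiv 1$ we have $\overline g\,\overline h=\overline{gh}$, so the left multiplications $L_g\colon x\mapsto\overline g\,x$ satisfy $L_gL_h=L_{gh}$ and $g\mapsto L_g$ is injective; with respect to the natural basis $\{\overline h\}_{h\in G}$ each $L_g$ is the semilinear monomial transformation whose semilinear part is $\Theta(g)$, whose permutation part is the left translation $h\mapsto gh$, and whose diagonal part is trivial. Hence $\widetilde G:=\{L_g\mid g\in G\}$ is a subgroup of $\Gamma{\rm L}_n$ isomorphic to $G$, contained in ${\rm Aut}(C)$ because $C$ is a left ideal; $P(\widetilde G)=G$ acts by left translations, hence regularly, which is a); and since $P$ is injective on $\widetilde G$, $P(u)=P(v)$ forces $u=v$ and a fortiori $\Pi_1(u)=\Pi_1(v)$, which is b).

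For the ``if'' direction, suppose $\widetilde G\leq{\rm Aut}(C)$ satisfies a) and b). Identify $\{1,\dots,n\}$ with $G$ through the regular action of $P(\widetilde G)$, base point $1_G$, and write the standard basis as $\{e_g\}_{g\in G}$. Applying b) with $v={\rm Id}$, the kernel of $P|_{\widetilde G}$ consists of diagonal matrices; granting that this kernel is trivial, $P|_{\widetilde G}$ is an isomorphism $\widetilde G\xrightarrow{\ \sim\ }G$, so each $g\in G$ has a unique lift $\widetilde g\in\widetilde G$, with $\widetilde 1={\rm Id}$ and $\widetilde g\widetilde h=\widetilde{gh}$. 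Put $\Theta(g):=\Pi_1(\widetilde g)$; this is a group homomorphism $G\to{\rm Aut}(K)$, because $\Pi_1$ is one and $g\mapsto\widetilde g$ is one, with $\Theta(1)={\rm Id}_K$. Writing $\widetilde g\,e_1=\beta_g e_g$ for (unique) $\beta_g\in K^*$, $\beta_1=1$, and setting $f_g:=\beta_g e_g$, a short computation — exactly as in the proof of the Theorem, the cocycle now being absorbed into the $\beta_g$ — gives $\widetilde g\,(\lambda f_h)=\Theta(g)(\lambda)\,f_{gh}$ for all $g,h\in G$ and $\lambda\in K$. Thus the $K$-linear isomorphism $K^n\to K[G,\Theta,1]$, $f_g\mapsto\overline g$, carries each $\widetilde g$ to left multiplication by $\overline g$; since $\widetilde G\cdot C=C$, the image of $C$ is a left ideal of $K[G,\Theta,1]$, i.e. a $(G,\Theta,1)$-code.

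The non-formal step — and the place where the skew case genuinely differs from the twisted skew case of the Theorem — is the triviality of $\ker(P|_{\widetilde G})$: equivalently, that the lifts $\widetilde g$ can be chosen closed under multiplication, equivalently that the $2$-cocycle coming out of the Theorem's construction can be normalised to the constant $1$. This plays the role, for skew group codes, of the scalar-diagonal hypothesis a) of the Theorem; if one keeps only that weaker hypothesis the construction returns an arbitrary cocycle and nothing beyond the Theorem is obtained. Everything else — that $\Theta$ is a homomorphism, that the monomial change of basis intertwines the $\widetilde G$-action with left multiplication, and that the image is a left ideal — is the routine bookkeeping already carried out in the proof of the Theorem.
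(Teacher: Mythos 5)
Your ``only if'' half is correct and is essentially the paper's own construction restricted to the left multiplications $L_{\overline g}$, so there is no issue there. The ``if'' half, however, rests on the step you yourself flag as non-formal: the triviality of $\ker(P|_{\widetilde G})$ is \emph{assumed} (``granting that this kernel is trivial'') and never derived from hypotheses a) and b) --- and it cannot be derived, because a) and b) do not force the cocycle coming out of the Theorem's construction to be trivial (indeed the Theorem's scalar--diagonal hypothesis has been dropped as well). Concretely, take $K=\F_3$ and let $C$ be the extended ternary $[12,6,6]$ Golay code: by \cite{CW21} it is a left ideal of a twisted group algebra $\F_3[G,1,\alpha]$ for a suitable group $G$ of order $12$ and a cocycle $\alpha$ whose class is nontrivial, but it is \emph{not} a group code. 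The image $\widetilde G$ of $G(1,\alpha)=K^*\times G$ in $\Gamma{\rm L}_{12}$ lies in ${\rm Aut}(C)$, $P(\widetilde G)=G$ acts regularly, so a) holds, and b) holds vacuously since ${\rm Aut}(\F_3)=1$ forces $\Pi_1\equiv{\rm Id}_K$. Over $\F_3$ the only choice of $\Theta$ is trivial, so the conclusion would make $C$ a $G$-group code, a contradiction. Hence the missing step is not routine bookkeeping: some extra hypothesis (for instance injectivity of $P|_{\widetilde G}$, i.e. that $\widetilde G$ meets the diagonal--semilinear part trivially) is genuinely needed, and your argument, which presupposes exactly this, does not prove the statement as printed; the paper itself offers no argument for the corollary that would close this gap.

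A secondary caveat, relevant even after assuming $\ker(P|_{\widetilde G})=1$: your intertwiner sends $f_g=\beta_g e_g$ to $\overline g$, i.e. it is a diagonal change of basis, whereas the Theorem's proof uses the pure relabelling $\Psi(e_g)=\overline g$, which is possible there because the lift is normalised so that $\widetilde g\,e_1=e_g$ exactly. With unique lifts you cannot renormalise the $\beta_g$, so what you literally obtain is that a diagonally rescaled copy of $C$ is a left ideal of $K[G,\Theta,1]$; to land in the paper's notion of a $(G,\Theta,1)$-code (left ideal after identifying coordinates with $G$ by a bijection only) you would additionally need $\beta_g=1$ for all $g$, or an explicit convention allowing diagonal equivalence.
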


\section{Duality}

As usual we denote by $C^\perp$ the dual of a code $C$.
The first result  is an extension of a Theorem of MacWilliams \cite{MacW}.

\begin{theo} \label{eucl} If $C \leq R$ is a left ideal in $R$, then 
$ C^\perp = \widehat{\ann_r(C)}$, where the right annihilator $\ann_r(C)$ of $C$ is defined as   $\ann_r(C) = \{ a \in R \mid c\cdot a = 0 \ \text{for all} \ c \in C \}$.
\end{theo}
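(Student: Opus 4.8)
\textbf{Proof plan for Theorem \ref{eucl}.}

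The plan is to show the two inclusions $C^\perp \supseteq \widehat{\ann_r(C)}$ and $C^\perp \subseteq \widehat{\ann_r(C)}$, the first directly and the second by a dimension count. The first step is to record the bilinear-form identity that makes the adjoint work: for $a = \sum_g a_g\overline{g}$ and $b = \sum_g b_g\overline{g}$ in $R$, the coefficient of $\overline{1}$ in the product $a\cdot b$ can be read off from \eqref{eq:proddef}, namely it equals $\sum_{g} a_g(\Theta(g)b_{g^{-1}})\alpha(g,g^{-1})$. Comparing with the definition of $\widehat{\ }$ and of $\langle\,,\,\rangle$, one gets the key relation
$$
\langle \widehat{b},\, a\rangle \;=\; \sum_{g\in G} (\Theta(g^{-1})b_g)\alpha(g,g^{-1})\, a_{g^{-1}} \;=\; \text{(coefficient of }\overline{1}\text{ in }a\cdot b\text{, up to applying }\Theta(g^{-1})\text{)} ,
$$
so that, after a short manipulation using that $\Theta$ fixes $\alpha$, one obtains: for all $a,b\in R$, $\langle \widehat{b}, a\rangle$ equals the $\overline{1}$-coefficient of $b\cdot a$ (or of $a\cdot \widehat{\widehat b}$; the precise bookkeeping is routine). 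More useful is the version with a group element inserted: $\langle \widehat{b}\,\overline{g},\, a\rangle$ equals the $\overline{g^{-1}}$-coefficient of something symmetric in $a$ and $b$; I would state it as: for all $a, x \in R$ and all $g \in G$, $\langle x, a\overline{g}\rangle = \langle \widehat{\overline{g}}\,\widehat{a}, \text{something}\rangle$ — but the cleanest formulation to aim for is simply $\langle \widehat{a}, x\rangle = \langle \widehat{x}, a\rangle$ together with the fact that the $\overline{1}$-coefficient of $c\cdot a$ equals $\langle \widehat{a}, c\rangle$.

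Granting that identity, the first inclusion is immediate: if $a \in \ann_r(C)$ then $c\cdot a = 0$ for all $c \in C$, hence in particular the $\overline{g}$-coefficient of $c\cdot a$ vanishes for every $g$ and every $c\in C$; translating via the identity above (applied with $c$ replaced by $\overline{g}^{-1}c$, which again lies in $C$ since $C$ is a left ideal) gives $\langle \widehat{a}, c\rangle = 0$ for all $c\in C$, i.e. $\widehat{a} \in C^\perp$. Thus $\widehat{\ann_r(C)} \subseteq C^\perp$. For the reverse inclusion I would count dimensions: since $\widehat{\ }: R \to K[G,\Theta,\alpha^{-1}]$ is a distance-preserving bijection (Theorem \ref{antiiso}) and in particular $K$-semilinear and injective, $\dim_K \widehat{\ann_r(C)} = \dim_K \ann_r(C)$. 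Now $R$ is a Frobenius ring, so the standard double-annihilator relation for left ideals in a finite Frobenius ring gives $\dim_K C + \dim_K \ann_r(C) = \dim_K R = |G|$; since also $\dim_K C + \dim_K C^\perp = |G|$ for the nondegenerate form $\langle\,,\,\rangle$, we get $\dim_K \widehat{\ann_r(C)} = \dim_K C^\perp$, and combined with the inclusion already proved this forces equality.

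The main obstacle I expect is the careful verification of the bilinear-form identity: one must track the twisting automorphisms $\Theta(g)$ and the cocycle values $\alpha(g,g^{-1})$ through the definition of $\widehat{\ }$, and check that $\langle \widehat{a}, c\rangle$ really does compute the $\overline{1}$-component of $c\cdot a$ (this is where the hypothesis $\Theta(g)\alpha = \alpha$ and the normalization of $\alpha$ are used, exactly as in the proof of Theorem \ref{antiiso} and the Frobenius theorem). A secondary point needing care is the invocation of $\dim_K C + \dim_K \ann_r(C) = |G|$: this is the annihilator property of Frobenius rings over a field, which follows from the module isomorphism $R \cong \widehat{R}$ established in the Frobenius theorem above, but one should either cite it (e.g. via \cite{Wood99}) or note that $\ann_r(C)$ corresponds under $R \cong \widehat{R}$ to the annihilator of $C$ in the character-module sense, whose order is $|R|/|C|$. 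Once these two ingredients are in place the theorem follows without further computation.
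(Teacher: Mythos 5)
Your first inclusion is exactly the paper's: the identity $\langle c,\widehat{a}\rangle=(c\cdot a)_{\overline{1}}$ (the coefficient of $\overline{1}$ in $c\cdot a$), which follows from the product formula \eqref{eq:proddef} together with $\alpha(g,g^{-1})=\alpha(g^{-1},g)$, immediately gives $\widehat{\ann_r(C)}\subseteq C^\perp$. Where you genuinely diverge is the reverse inclusion. You import the annihilator cardinality relation $|C|\cdot|\ann_r(C)|=|R|$ for left ideals of a finite Frobenius ring as a black box; the paper instead proves the needed inequality from scratch by studying the evaluation map $\Lambda\colon R\to \ho_K(C,K)$, $\Lambda(a)(c)=(c\cdot a)_{\overline{1}}$, and showing $\kernel\Lambda=\ann_r(C)$ (the point being that if $c\cdot a$ has a nonzero coefficient at some $\overline{x}$ with $x\neq 1$, then $\overline{x^{-1}}c\in C$ and $(\overline{x^{-1}}c)\cdot a$ has the nonzero coefficient $(\Theta(x^{-1})k_x)\alpha(x^{-1},x)$ at $\overline{1}$), whence $|R|/|\ann_r(C)|\leq|\ho_K(C,K)|=|C|$; combined with the first inclusion this squeezes every inequality into an equality. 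Your route is shorter but shifts the burden onto a correct citation of the Frobenius annihilator relation (reachable via \cite{Wood99}, though the left/right conventions must be checked); the paper's route is self-contained and in effect reproves that relation for this particular ring.

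Two small repairs. First, the auxiliary identity $\langle\widehat{a},x\rangle=\langle\widehat{x},a\rangle$ that you float as a ``clean formulation'' is false in general: after reindexing, one side carries $(\Theta(g^{-1})a_g)\,x_{g^{-1}}$ and the other $a_g\,(\Theta(g)x_{g^{-1}})$, and these sums differ when $\Theta\neq 1$ (already for $C_2$ over $\F_4$ with the Frobenius). Fortunately you never actually use it; the symmetry of $\langle\cdot,\cdot\rangle$ plus $\langle c,\widehat{a}\rangle=(c\cdot a)_{\overline{1}}$ is all that is needed. Second, phrase the final count with cardinalities rather than $K$-dimensions: the map $\widehat{\ }$ is additive and bijective but not $K$-linear, and $\ann_r(C)$ is a right ideal, so it is not a priori a left $K$-subspace; saying $|\widehat{\ann_r(C)}|=|\ann_r(C)|=|R|/|C|=|C^\perp|$ and concluding from the inclusion of finite sets avoids the issue entirely.
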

\begin{proof} If $c = \sum_{h \in G} c_h \overline{h} \in C$ and $a=\sum_{g \in G}a_g \overline{g} \in \ann_r(C)$, then
$$ 0 = ca=\sum_{g,h \in G} c_h(\Theta(h)a_g)\alpha(h,g)\overline{hg}. $$
In particular,
$$0 = \sum_{g \in G} c_{g^{-1}}(\Theta(g^{-1})a_g)\alpha(g^{-1},g) = \sum_{g \in G} c_{g^{-1}}(\Theta(g^{-1})a_g)\alpha(g,g^{-1}), $$
since $\alpha(g,g^{-1})= \alpha(g^{-1},g),$ which is a consequence of the cocycle condition
of $\alpha$. It follows that
$$ \langle c, \widehat{a} \rangle = \sum_{g \in G} c_{g^{-1}}(\Theta(g^{-1})a_g)\alpha(g,g^{-1}) =0,$$
hence $\widehat{\ann_r(C)} \subseteq C^\perp$. Next we consider the map
$$ \Lambda: R \longrightarrow C^* = \ho_K(C,K) $$
where $\Lambda(a)c = (ca)_{\overline{1}},$ the scalar in $ca$ at $\overline{1}$. Note that $\Lambda$ is not $K$-linear, but
a group homomorphism from $(R,+)$ into $(C^*,+)$. Suppose that $ 0 = (ca)_{\overline{1}}$ for all $c \in C$ and
 $ca = \sum_{g \in G}k_g \overline{g}$ with $k_x \not= 0$ for some $1 \not=x \in G$. Since $C$ is a left ideal, we have $\overline{x^{-1}}c \in C$
and $(\overline{x^{-1}}c)a$ has the scalar $$(\Theta(x^{-1})k_x)\alpha(x^{-1},x)\not=0$$ at $\overline{1}$, a contradiction.
This shows that $\kernel \Lambda = \ann_r(C)$. It follows 
$$\begin{array}{rcl}
|C^*| & \geq & \frac{|R|}{|\kernel \lambda|} = \frac{|R|}{|\ann_r(C)|} =  \frac{|R|}{|\widehat{\ann_r(C)}|} \geq \frac{|R|}{|C^\perp|} =
|C| = |\ho_K(C,K)|= |C^*|,
\end{array}
$$
hence $ C^\perp = \widehat{\ann_r(C)}$.
\end{proof}

\begin{rk} For group codes $C$ the dual $C^\perp$ is a group code as well. This is far away to be true in general for twisted group codes. Note, that according to Theorem \ref{antiiso} and Theorem \ref{eucl},
$C^\perp$ is only a left ideal in $K[G,1,\alpha^{-1}]$
if $C \leq K[G,1,\alpha]$.  This fact causes many problems when dealing with $C^\perp$ in the case $\alpha \not= \alpha^{-1}$. 

\end{rk}

\begin{defi}
Let $|K|=q^2$ and let $C \leq R.$ \\
a) The Hermitian  form $\langle \cdot\, , \cdot\rangle_h $  on $R$ is defined by
$\langle a, b\rangle_h = \sum_{g \in G} a_g(b_g)^q = \sum_{g \in G} a_g b_g^q$. \\
b) We denote the dual of $C \leq R$ with respect to the 
Hermitian form  by $C^{\perp_h}$. \\
c) For $a = \sum_{g \in G} a_g \overline{g}$ we put
$a^{(q)} = \sum_{g \in G} a_g^q \overline{g}$. \\
d) $C^{(q)} = \{ c^{(q)} \mid c \in C \}$.
\end{defi}

\begin{theo} \label{herm} Let $|K|=q^2$ and let $ C \leq R$ be a left ideal in $R$. If $\alpha(x,y)^q =\alpha(x,y)$ for all $x,y \in G$, then 
$ C^{\perp_h} =\widehat{\ann_r(C^{(q)})}$.
\end{theo}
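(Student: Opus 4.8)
The plan is to reduce the Hermitian statement to the Euclidean one, Theorem \ref{eucl}, applied to the code $C^{(q)}$.

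First I would note that for all $a,b \in R$ one has $\langle a,b\rangle_h = \langle a, b^{(q)}\rangle$, which is immediate from the definitions since the $\overline{g}$-coefficient of $b^{(q)}$ equals $b_g^q$. Hence $a \in C^{\perp_h}$ if and only if $\langle a, c^{(q)}\rangle = 0$ for every $c \in C$; that is, $C^{\perp_h} = (C^{(q)})^\perp$, the Euclidean dual of $C^{(q)}$.

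The key step is to prove that $C^{(q)}$ is again a left ideal of $R=K[G,\Theta,\alpha]$. For this I would use four elementary properties of the map $\sigma_q\colon x \mapsto x^q$ on $K=\F_{q^2}$: (i) $\sigma_q$ is additive, since $q$ is a power of $\cha K$; (ii) $\sigma_q \in \Aut(K)$ and, as $\Aut(\F_{q^2})$ is cyclic hence abelian, $\sigma_q$ commutes with every $\Theta(g)$, i.e. $\Theta(g)(x^q) = (\Theta(g)(x))^q$; (iii) $\sigma_q$ is a bijection of $K$ (in fact $\sigma_q^{-1}=\sigma_q$); and (iv) $\alpha(x,y)^q = \alpha(x,y)$ for all $x,y \in G$ by hypothesis. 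Using these, for $c = \sum_h c_h\overline{h} \in C$, $g \in G$ and $a_g \in K$ one computes $a_g\overline{g}\cdot c^{(q)} = \sum_{h} a_g(\Theta(g)(c_h))^q\alpha(g,h)\overline{gh} = (a_g^q\,\overline{g}\cdot c)^{(q)}$, and the right-hand side lies in $C^{(q)}$ because $a_g^q\,\overline{g}\cdot c \in C$. Since $C^{(q)}$ is additive by (i), summing over $g$ gives $R\cdot C^{(q)} \subseteq C^{(q)}$, so $C^{(q)}$ is a left ideal.

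Finally, applying Theorem \ref{eucl} to the left ideal $C^{(q)}$ gives $(C^{(q)})^\perp = \widehat{\ann_r(C^{(q)})}$, and combining this with the first step yields $C^{\perp_h} = \widehat{\ann_r(C^{(q)})}$, as desired. I expect the only non-formal point to be the verification that $C^{(q)}$ is a left ideal: this is precisely where the hypothesis $\alpha^q=\alpha$ and the commutation of the Frobenius power $\sigma_q$ with $\Theta(G)$ are needed; everything else is bookkeeping.
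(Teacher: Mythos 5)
Your proof is correct, and it takes a cleaner route than the paper's. The paper proves the inclusion $\widehat{\ann_r(C^{(q)})} \subseteq C^{\perp_h}$ by a direct coefficient computation (extracting the $\overline{1}$-coefficient of $c^{(q)}a$ and raising the resulting relation to the $q$-th power, using $|K|=q^2$ and $\alpha^q=\alpha$), then asserts that $C^{(q)}$ is a left ideal and says the equality ``follows as in the Euclidean case,'' i.e.\ it re-runs the counting argument with the map $\Lambda$ from Theorem \ref{eucl}. You instead observe the identity $\langle a,b\rangle_h=\langle a,b^{(q)}\rangle$, hence $C^{\perp_h}=(C^{(q)})^{\perp}$, which turns the whole theorem into a formal corollary of Theorem \ref{eucl} applied to the left ideal $C^{(q)}$ --- no computation needs to be repeated. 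Both arguments hinge on the same key fact, namely that $C^{(q)}$ is again a left ideal of $R$; the paper only asserts this, while you actually verify it, correctly isolating the two places where the hypotheses enter (the assumption $\alpha^q=\alpha$, and the commutativity of $\Aut(\F_{q^2})$ giving $\Theta(g)(x^q)=(\Theta(g)x)^q$), together with $a_g^{q^2}=a_g$ from $|K|=q^2$. Your reduction buys brevity and makes transparent that the Hermitian statement is the Euclidean one in disguise; the paper's version has the minor virtue of exhibiting the explicit pairing $\langle c,\widehat{a}\rangle_h=\sum_{g} c_{g^{-1}}(\Theta(g^{-1})a_g)^q\alpha(g,g^{-1})$, but is otherwise strictly more work. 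No gaps.
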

\begin{proof} 
If $c = \sum_{h \in G} c_h \overline{h} \in C$ and $a=\sum_{g \in G}a_g \overline{g} \in \ann_r(C^{(q)})$, then
$$ 0 = c^{(q)}a=\sum_{g,h \in G} c^q_h(\Theta(h)a_g)\alpha(h,g)\overline{hg}. $$
In particular,
$$0 = \sum_{g \in G} c^q_{g^{-1}}(\Theta(g^{-1})a_g)\alpha(g^{-1},g) = \sum_{g \in G} c_{g^{-1}}(\Theta(g^{-1})a_g)^q\alpha(g,g^{-1}), $$
since $\alpha(g,g^{-1})= \alpha(g^{-1},g).$ It follows
$$ \langle c, \widehat{a} \rangle_{h} = \sum_{g \in G} c_{g^{-1}}(\Theta(g^{-1})a_g)^q\alpha(g,g^{-1}) =0,$$
hence $\widehat{\ann_r(C^{(q)})} \subseteq C^{\perp_h}$. Note that $C^{(q)}$ is a left ideal in $R$ since $\alpha$ is fixed by
the Frobenius automorphism $x \mapsto x^q$. The equality $\widehat{\ann_r(C^{(q)})} = C^\perp_h$ now follows as
in the Euclidean case.
\end{proof}



\section{Idempotent codes} \label{idempotents}

\begin{defi} A code $C \leq R$ is {\it idempotent} if there exists an idempotent $e \in R$, i.e., $e=e^2$ such that $C=Re$.
\end{defi}



\begin{defi} Let $C\leq R$ a twisted skew group code. 
\begin{itemize}
\item[\rm a)] $C$ is called a {\it Euclidean self-dual}, resp. {\it Hermitian self-dual} code if $C=C^\perp$ resp. $C =C^{\perp_h}$.
\item[\rm b)] $C$ is {\it Euclidean  LCD}, resp. {\it Hermitian  LCD} if $C \oplus C^\perp = R$ resp. $C \oplus C^{\perp_h} = R$.
\end{itemize}
\end{defi}

Note that all twisted skew group codes are idempotent codes if $\cha K \nmid |G|$, by Theorem \ref{Maschke}. Also in the case that $R$ is not semisimple it turned out that some interesting examples are idempotent codes, for instance the ternary self-dual extended Golay code \cite{CW21}. Note that a Euclidean LCD group code is always an idempotent code \cite{CW18}. This does not hold true for a twisted LCD group code since
$C^\perp$ might not be a left ideal in $R$ for $\alpha \not= \alpha^{-1}$. \\

Suppose now that the cocycle $\alpha$
satisfies $\alpha = \alpha^{-1}$, i.e.,
$\alpha$ takes only values $\pm 1$.
In this case we have
\begin{equation} \label{eq5}
\langle a\cdot b,c \rangle= \langle a, c \cdot \widehat{b}\rangle
\end{equation} for all $a,b,c \in R$, by direct calculations.

Most of the following results  have been proved already for group or twisted group codes
in \cite{CW18,CW21}. The proofs for twisted skew group codes are obvious extenstions. Thus, for the reader's convenience, we will give only one proof.

\begin{prop}\label{LCD-idempotent} Assume that $\alpha=\alpha^{-1}$.
If $C \leq R$ is a twisted skew group code, then the following are equivalent.
\begin{itemize}
\item[\rm a)] $C$ is an Euclidean LCD code.
\item[\rm b)] $C=Re$ where $e^2=e=\widehat{e}$.
\end{itemize}
\end{prop}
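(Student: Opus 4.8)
The plan is to exploit the adjunction formula \eqref{eq5}, which holds precisely because $\alpha=\alpha^{-1}$, together with the semisimplicity of $R$ when $\cha K \nmid |G|$ — but actually I should be careful: the statement does not assume semisimplicity, so I must argue directly. First I would prove the implication b) $\Rightarrow$ a). Suppose $C = Re$ with $e^2 = e = \widehat{e}$. Set $f = \overline{1} - e$; then $f^2 = f$ and $\widehat{f} = \overline{1} - \widehat{e} = f$, and $R = Re \oplus Rf$ as left $R$-modules (a direct sum because $e$ is idempotent: any $r = re + r f$, and $re' = 0$ for $e' \in Re \cap Rf$ forces $e' = e'e = 0$). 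It remains to identify $Rf$ with $C^\perp$. Using \eqref{eq5}, for $a,b \in R$ we have $\langle ae, bf\rangle = \langle a, bf\widehat{e}\rangle = \langle a, b f e\rangle = \langle a, b(\overline{1}-e)e\rangle = 0$ since $fe = (\overline{1}-e)e = 0$ and $\widehat{f}=f$ gives $f\widehat{e}=fe$ — wait, I need $\langle ae,bf\rangle = \langle a \cdot e, b f\rangle = \langle a, bf \cdot \widehat e\rangle$, and $f\widehat e = fe = 0$. Hence $Re \perp Rf$, so $Rf \subseteq C^\perp$; since $\dim Re + \dim Rf = \dim R = \dim C + \dim C^\perp$, equality $Rf = C^\perp$ follows, and therefore $C \oplus C^\perp = Re \oplus Rf = R$, i.e., $C$ is Euclidean LCD.

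Next I would prove a) $\Rightarrow$ b). Assume $C \oplus C^\perp = R$. The key point is that $C^\perp$ is again a left ideal: by Theorem \ref{antiiso} and Theorem \ref{eucl}, $C^\perp = \widehat{\ann_r(C)}$ is a left ideal in $K[G,\Theta,\alpha^{-1}] = K[G,\Theta,\alpha]$ precisely because $\alpha = \alpha^{-1}$. So $R = C \oplus C^\perp$ is a decomposition into left ideals, which yields a central-looking idempotent decomposition $\overline{1} = e + e'$ with $e \in C$, $e' \in C^\perp$, $e^2 = e$, $(e')^2 = e'$, $ee' = e'e = 0$, and $C = Re$. It remains to show $\widehat{e} = e$. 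From \eqref{eq5}, for all $a, c \in R$ one has $\langle ae, c\rangle = \langle a, c\widehat{e}\rangle$; applying this with the decomposition, $\langle c\widehat e, \cdot\rangle$ agrees with $\langle (\cdot)e, c\rangle$. I would argue that $\widehat{e}$ is the idempotent corresponding to the projection onto $C$ along $C^\perp$ with respect to the form: since $e$ projects $R$ onto $C$ (as left ideals) and the bilinear form is nondegenerate with $C \perp C^\perp$, the adjoint of left-multiplication by $e$ (which is right-multiplication by $\widehat e$) must also be a projection with the same image and kernel, forcing $\widehat e = e$. Concretely: $\langle Re, c\widehat e\rangle = \langle Re\cdot e, c\rangle$... let me instead use that $\widehat e$ is idempotent (by Theorem \ref{antiiso}, $\widehat{e^2} = \widehat{e}\,\widehat{e}$, so $\widehat e$ is idempotent) with $\widehat{\widehat e} = e$; then run the b)$\Rightarrow$a) argument with $\widehat e$ in place of $e$ to get $R = R\widehat e \oplus (R\widehat e)^\perp$ with $(R\widehat e)^\perp = R(\overline 1 - \widehat e)$, and compare the two LCD decompositions $R = Re \oplus C^\perp = R\widehat e \oplus \widehat{C}^{\perp}$ to pin down $e = \widehat e$ via uniqueness of the idempotent generating $C$ once one shows $R\widehat e = C^{\perp\perp} = C$.

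The main obstacle I expect is the final step $\widehat e = e$ in a) $\Rightarrow$ b): extracting it cleanly requires knowing that the idempotent generator of a left ideal that is a direct summand is unique up to the complement, and then matching the adjoint idempotent $\widehat e$ (which generates some left ideal and satisfies $\widehat{\widehat e} = e$) against $e$ itself. The cleanest route is probably: show $R\widehat{e}$ is the unique left ideal $D$ with $D \oplus C^\perp = R$ and $\langle D, C^\perp\rangle$... no — rather, show directly that $e\widehat e = \widehat e$ and $\widehat e e = e$ using \eqref{eq5} and $C \oplus C^\perp = R$, $ee'=e'e=0$, from which $e = \widehat e$ follows by symmetry (applying $\widehat{\ }$ to $e\widehat e = \widehat e$ gives $e\widehat e = e$ since $\widehat{\widehat e}=e$, hence $e = \widehat e$). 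Establishing $\widehat e e = e$: for all $a \in R$, $\langle a, e\rangle = \langle a, \overline 1 \cdot e\rangle$; meanwhile writing $a = a_1 + a_2 \in C \oplus C^\perp$, $\langle a, e\rangle = \langle a_1, e\rangle$ and one checks $\langle a \widehat e, \text{stuff}\rangle$... I will work this out in the write-up, but morally it is the observation that $e \mapsto \widehat e$ is the adjoint of a projection and adjoints of projections along orthogonal complements are the same projection. All remaining verifications (idempotency, the direct sum decompositions, dimension count) are routine given \eqref{eq5}, Theorem \ref{antiiso}, and Theorem \ref{eucl}.
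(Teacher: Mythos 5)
Your plan is correct and essentially reproduces the paper's proof: b)\,$\Rightarrow$\,a) via \eqref{eq5} together with $fe=(\overline{1}-e)e=0$ and a dimension count, and a)\,$\Rightarrow$\,b) by observing that $C^\perp$ is a left ideal when $\alpha=\alpha^{-1}$, splitting $\overline{1}=e+f$ into orthogonal idempotents with $C=Re$, $C^\perp=Rf$, and then recovering $e=\widehat{e}$ by applying the adjoint map to $e\widehat{e}=\widehat{e}$ and using $\widehat{\widehat{e}}=e$. The one step you defer --- establishing $e\widehat{e}=\widehat{e}$ --- is exactly the paper's move: for all $a\in R$ one has $0=\langle ae,f\rangle=\langle a,f\widehat{e}\rangle$, so nondegeneracy of the form gives $(\overline{1}-e)\widehat{e}=0$; with that single line inserted (you do not need $\widehat{e}e=e$ separately), your write-up coincides with the paper's.
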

\begin{proof}
First we assume b), that is $C=Re$ where $e^2=e=\widehat{e}$. Clearly, $R = Re \oplus R(1-e)$, since $e$ is an idempotent.
Using equation (\ref{eq5}) we get
$$
\langle a e , b(1-e) \rangle=\langle a ,b (1-e)\widehat{e} \rangle =\langle  a , b (1-e)e\rangle= \langle a ,0 \rangle=0,
$$ for all $a,b \in R$. Hence  $R(1-e) \leq C^{\perp}$. Since
$$
\mathrm{dim}_{K}(R(1-e))=|G|-\mathrm{dim}_{K}(C)=\mathrm{dim}_{K}(C^{\bot}),$$ we have $R(1-e)=C^{\bot}$.

Conversely assume that a) holds.  Note that $C^{\perp}$ is a left ideal of $R$, which follows from Theorem \ref{antiiso} and Theorem \ref{eucl}, using $\alpha=\alpha^{-1}$.  Let $R=C \oplus C^{\bot}$ and write $1=e+f$ with $e \in C$ and $f  \in C^{\bot}$. Clearly,
$e=e^2+ef$ and $f=fe+f^2$, which implies $e^2=e$, $f^2=f$ and $ef=fe=0$. Furthermore, $R=Re \oplus Rf$, hence $C=Re$ and $C^{\bot}=Rf$. If $a \in R$, then
$$
0=\langle ae ,f \rangle=\langle a , f \widehat{e} \rangle=\langle a , (1-e) \widehat{e} \rangle.
$$ Since $\langle \cdot \, , \cdot \rangle$ is nondegenerate, we obtain $(1-e)\widehat{e}=0$ or equivalently $\widehat{e}=e\widehat{e}$. Finally, using Theorem \ref{antiiso} again, we get
$$e=\widehat{\widehat{e}}=\widehat{e\widehat{e}}=\widehat{\widehat{e}}\widehat{e}=e\widehat{e}=\widehat{e},$$ which completes the proof.
\end{proof}

\begin{prop}
Assume that $\alpha=\alpha^{-1}$. For $0 \neq C=Re \leq R$  with $e^2=e$, the following conditions are equivalent.
\begin{itemize}
\item[\rm a)] $C$ is an Euclidean self-dual code.
\item[\rm b)] $e\widehat{e}=0$ and $\overline{1}-\widehat{e}=(\overline{1}-\widehat{e})e$. 
\end{itemize}
\end{prop}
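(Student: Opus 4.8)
The plan is to identify $C^\perp$ explicitly as a principal left ideal and then turn the self-duality condition $C=C^\perp$ into an equality of two principal left ideals generated by idempotents, which is elementary to handle.

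First I would record two preliminary facts. Since $\alpha=\alpha^{-1}$, Theorem \ref{antiiso} tells us that the adjoint is an anti-automorphism of $R$ onto itself with $\widehat{\widehat{a}}=a$, and combined with Theorem \ref{eucl} this shows that $C^\perp$ is again a left ideal of $R$. Moreover $\widehat{e}$ is an idempotent: from $\widehat{ab}=\widehat{b}\,\widehat{a}$ and $e^2=e$ we get $\widehat{e}^{\,2}=\widehat{e^2}=\widehat{e}$. Now, using the adjunction formula \eqref{eq5}, which holds precisely because $\alpha=\alpha^{-1}$, an element $c\in R$ lies in $C^\perp=(Re)^\perp$ if and only if $\langle ae,c\rangle=0$ for all $a\in R$, i.e. $\langle a,c\widehat{e}\rangle=0$ for all $a\in R$, which by nondegeneracy of $\langle\cdot\,,\cdot\rangle$ is equivalent to $c\widehat{e}=0$. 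Hence $C^\perp=\{c\in R\mid c\widehat{e}=0\}$, and since $\widehat{e}$ is an idempotent this set equals $R(\overline{1}-\widehat{e})$: indeed $r(\overline{1}-\widehat{e})\widehat{e}=0$ for every $r\in R$, and conversely $c\widehat{e}=0$ forces $c=c(\overline{1}-\widehat{e})$.

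It then remains to decide when $Re=R(\overline{1}-\widehat{e})$. Writing $f=\overline{1}-\widehat{e}$, which is again an idempotent, I would invoke the standard fact that for idempotents $e,f$ one has $Re\subseteq Rf\iff ef=e$ and $Rf\subseteq Re\iff fe=f$ (for instance $e\in Rf$ means $e=rf$, whence $ef=rf^2=rf=e$). Thus $C=C^\perp$ if and only if $ef=e$ and $fe=f$. Substituting $f=\overline{1}-\widehat{e}$, the first equation $e(\overline{1}-\widehat{e})=e$ is equivalent to $e\widehat{e}=0$, and the second equation $(\overline{1}-\widehat{e})e=\overline{1}-\widehat{e}$ is exactly the second condition in b); this yields the asserted equivalence. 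The only slightly delicate point is the bookkeeping guaranteeing that $C^\perp$ is genuinely a left ideal of $R$ (so that writing $C^\perp=Rf$ makes sense at all), which is precisely where the hypothesis $\alpha=\alpha^{-1}$ enters, through Theorems \ref{antiiso} and \ref{eucl}; the rest is formal manipulation with idempotents and the adjunction \eqref{eq5}.
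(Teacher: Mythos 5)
Your proof is correct, and its overall architecture matches the paper's: identify $C^\perp$ as the principal left ideal $R(\overline{1}-\widehat{e})$ and then compare the two idempotent-generated left ideals via the standard criterion $Re\subseteq Rf\iff ef=e$. The one place where you diverge is in how $C^\perp=R(\overline{1}-\widehat{e})$ is obtained: the paper first computes $\ann_r(Re)=(\overline{1}-e)R$ and then applies the MacWilliams-type Theorem \ref{eucl} ($C^\perp=\widehat{\ann_r(C)}$) together with the anti-isomorphism of Theorem \ref{antiiso}, whereas you derive it in one stroke from the adjunction \eqref{eq5} and the nondegeneracy of the Euclidean form, noting that $\widehat{e}$ is again an idempotent so that $\{c\mid c\widehat{e}=0\}=R(\overline{1}-\widehat{e})$. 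Your route is a bit more self-contained for this particular proposition, since it bypasses the cardinality argument underlying Theorem \ref{eucl}; on the other hand it leans entirely on the hypothesis $\alpha=\alpha^{-1}$ through \eqref{eq5}, while the paper's route factors through statements valid for general $\alpha$ and uses $\alpha=\alpha^{-1}$ only to ensure that $\widehat{\ann_r(C)}$ is a left ideal of $R$ itself. Both arguments are sound, and the concluding idempotent bookkeeping ($e(\overline{1}-\widehat{e})=e\iff e\widehat{e}=0$ and $(\overline{1}-\widehat{e})e=\overline{1}-\widehat{e}$) is identical in the two proofs.
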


The propositions above have the following Hermitian extension.
\begin{prop}\label{LCD-hermitian-idempotent} Assume that $\alpha=\alpha^{-1}$ and $|K|=q^2$.
If $C \leq R$ is a twisted skew group code, then the following are equivalent.
\begin{itemize}
\item[\rm a)] $C$ is an Hermitian LCD code.
\item[\rm b)] $C=Re$ where $e^2=e=\widehat{e^{(q)}}$.
\end{itemize}
\end{prop}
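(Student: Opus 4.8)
The plan is to mimic the proof of Proposition \ref{LCD-idempotent}, replacing the Euclidean form by the Hermitian form and the adjoint $\widehat{e}$ by $\widehat{e^{(q)}}$. The key computational fact I would establish first is the Hermitian analogue of \eqref{eq5}, namely
\begin{equation*}
\langle a\cdot b, c\rangle_h = \langle a, c\cdot \widehat{b^{(q)}}\rangle_h
\end{equation*}
for all $a,b,c\in R$, valid when $\alpha=\alpha^{-1}$ and $\alpha(x,y)^q=\alpha(x,y)$ (which holds automatically here since $\alpha$ takes values $\pm 1$). This should follow by a direct calculation analogous to the one producing \eqref{eq5}, using that the map $x\mapsto x^q$ is a field automorphism fixing $\alpha$, so that $(a\cdot b)^{(q)} = a^{(q)}\cdot b^{(q)}$ in $K[G,\Theta,\alpha]$, and combining this with \eqref{eq5} applied to $a^{(q)}, b^{(q)}, c^{(q)}$ together with the identity $\langle x,y\rangle_h = \langle x^{(q)}, y^{(q)}\rangle$-type relations. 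I expect this lemma to be the main obstacle, since one has to be careful about where the $q$-power lands and that $\widehat{\phantom{a}}$ is not $K$-linear; everything downstream is formal.

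Granting that identity, the proof of b) $\Rightarrow$ a) runs as follows. Set $e$ with $e^2=e=\widehat{e^{(q)}}$ and $C=Re$. Then $R = Re\oplus R(\overline 1-e)$. For all $a,b\in R$,
\begin{equation*}
\langle ae, b(\overline 1-e)\rangle_h = \langle a, b(\overline 1-e)\widehat{e^{(q)}}\rangle_h = \langle a, b(\overline 1-e)e\rangle_h = 0,
\end{equation*}
so $R(\overline 1-e)\leq C^{\perp_h}$, and a dimension count ($\dim_K R(\overline 1-e) = |G|-\dim_K C = \dim_K C^{\perp_h}$, the last equality because the Hermitian form is nondegenerate) forces $R(\overline 1-e) = C^{\perp_h}$; hence $C\oplus C^{\perp_h}=R$, i.e.\ $C$ is Hermitian LCD.

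For a) $\Rightarrow$ b): first note $C^{\perp_h}$ is a left ideal of $R$ by Theorem \ref{herm} (its hypothesis $\alpha(x,y)^q=\alpha(x,y)$ holds since $\alpha=\pm 1$) together with Theorem \ref{antiiso}. Write $\overline 1 = e+f$ with $e\in C$, $f\in C^{\perp_h}$; as in Proposition \ref{LCD-idempotent} one gets $e^2=e$, $f^2=f$, $ef=fe=0$, $R=Re\oplus Rf$, so $C=Re$ and $C^{\perp_h}=Rf$. Then for all $a\in R$,
\begin{equation*}
0 = \langle ae, f\rangle_h = \langle a, f\widehat{e^{(q)}}\rangle_h = \langle a, (\overline 1-e)\widehat{e^{(q)}}\rangle_h,
\end{equation*}
and nondegeneracy of $\langle\cdot,\cdot\rangle_h$ gives $(\overline 1-e)\widehat{e^{(q)}}=0$, i.e.\ $\widehat{e^{(q)}} = e\widehat{e^{(q)}}$. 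Finally I would apply the anti-isomorphism and the involution property: from $\widehat{\widehat{a}}=a$ (Theorem \ref{antiiso}, using $\alpha=\alpha^{-1}$) and $(\phantom{a})^{(q)}$ being an involutive ring automorphism commuting appropriately with $\widehat{\phantom{a}}$, one deduces $\widehat{e^{(q)}}$ is itself an idempotent equal to $e$; concretely, applying $\widehat{\phantom{a}}$ then $(\phantom{a})^{(q)}$ to $\widehat{e^{(q)}}=e\widehat{e^{(q)}}$ and using $\widehat{xy}=\widehat y\,\widehat x$ yields $e = \widehat{e^{(q)}}$, closing the argument. The one point requiring care is checking that $\widehat{\phantom{a}}$ and $(\phantom{a})^{(q)}$ commute, i.e.\ $\widehat{a^{(q)}} = (\widehat a)^{(q)}$, which holds because $x\mapsto x^q$ fixes $\alpha$ and commutes with every $\Theta(g)$ (both lie in the abelian group $\Aut(K)$).
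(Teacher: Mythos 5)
Your proposal is correct and follows exactly the route the paper intends: the paper states that the Hermitian propositions are ``obvious extensions'' of the Euclidean case and prints only the proof of Proposition~\ref{LCD-idempotent}, and you carry out precisely that extension, correctly identifying the two points that actually need verification (the Hermitian analogue of~\eqref{eq5} and the commutation of $\widehat{\phantom{a}}$ with $(\phantom{a})^{(q)}$, both of which hold because Frobenius fixes $\alpha=\pm1$ and commutes with every $\Theta(g)$). No gaps.
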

\begin{prop} \label{self-herm}
Assume that $\alpha=\alpha^{-1}$ and $|K|=q^2$. For $0 \neq C=Re \leq R$  with $e^2=e$, the following conditions are equivalent.
\begin{itemize}
\item[\rm a)] $C$ is an Hermitian self-dual code.
\item[\rm b)] $e\widehat{e^{(q)}}=0$ and $\overline{1}-\widehat{e^{(q)}}=(\overline{1}-\widehat{e^{(q)}})e$. 
\end{itemize}
\end{prop}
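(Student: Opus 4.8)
The plan is to mimic the proof of the Euclidean self-duality criterion (the commented-out version following the Euclidean LCD Proposition), replacing the Euclidean adjoint $\widehat{e}$ by the twisted adjoint $\widehat{e^{(q)}}$ throughout, and carefully tracking where the Hermitian form and the $q$-power substitution interact. The key preliminary observation I would record first is the analogue of \eqref{eq5} for the Hermitian form, namely
$$
\langle a\cdot b,c\rangle_h = \langle a, c\cdot \widehat{b^{(q)}}\rangle_h
$$
for all $a,b,c\in R$, valid under the assumption $\alpha=\alpha^{-1}$ and $\alpha(x,y)^q=\alpha(x,y)$ (which holds automatically here since $\alpha$ takes values $\pm1$). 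This is the Hermitian counterpart of the computation that gives \eqref{eq5}, and it is the single identity that drives everything; I would verify it by the same direct calculation, noting that $(\cdot)^{(q)}$ is a ring homomorphism on $K[G,\Theta,\alpha]$ because the Frobenius fixes $\alpha$ and commutes with every $\Theta(g)$.

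Next I would recall, exactly as in the Euclidean case, that $\ann_r(Re)=(\overline{1}-e)R$: the inclusion $\supseteq$ is clear, and if $x\in\ann_r(Re)$ then $ex=0$, so $x=(\overline{1}-e)x$. Combining Theorem \ref{herm} with $\widehat{\widehat{a}}=a$ (Theorem \ref{antiiso}, using $\alpha=\alpha^{-1}$) and the fact that $\widehat{\ }$ is an anti-isomorphism sending $(\overline{1}-e)R$ to $R\widehat{(\overline{1}-e)^{(q)}} = R(\overline{1}-\widehat{e^{(q)}})$, I get
$$
C^{\perp_h} = \widehat{\ann_r(C^{(q)})} = R\bigl(\overline{1}-\widehat{e^{(q)}}\bigr).
$$
Here one should check that $\ann_r(C^{(q)}) = \ann_r((Re)^{(q)}) = \ann_r(Re^{(q)}) = (\overline{1}-e^{(q)})R$, and that $\widehat{(\overline{1}-e^{(q)})R} = R\,\widehat{\overline{1}-e^{(q)}} = R(\overline{1}-\widehat{e^{(q)}})$, since $\widehat{\overline{1}}=\overline{1}$ and the adjoint reverses products. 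Note $\widehat{e^{(q)}}$ is an idempotent: applying the anti-isomorphism $\widehat{\ }$ to $(e^{(q)})^2=e^{(q)}$ (itself the image of $e^2=e$ under the homomorphism $(\cdot)^{(q)}$) gives $(\widehat{e^{(q)}})^2=\widehat{e^{(q)}}$.

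With the identification $C^{\perp_h}=R(\overline{1}-\widehat{e^{(q)}})$ in hand, the equivalence becomes pure idempotent bookkeeping, identical to the Euclidean argument. For (a)$\Rightarrow$(b): if $Re=R(\overline{1}-\widehat{e^{(q)}})$ with both generators idempotent, then $e$ and $\overline{1}-\widehat{e^{(q)}}$ generate the same left ideal, so $e(\overline{1}-\widehat{e^{(q)}})=e$ — equivalently $e\widehat{e^{(q)}}=0$ — and $(\overline{1}-\widehat{e^{(q)}})e = \overline{1}-\widehat{e^{(q)}}$. For (b)$\Rightarrow$(a): from $(\overline{1}-\widehat{e^{(q)}})e=\overline{1}-\widehat{e^{(q)}}$ we get $R(\overline{1}-\widehat{e^{(q)}})\subseteq Re$, and from $e\widehat{e^{(q)}}=0$ we get $e(\overline{1}-\widehat{e^{(q)}})=e$, hence $Re\subseteq R(\overline{1}-\widehat{e^{(q)}})$; thus $C=C^{\perp_h}$. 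I do not expect a serious obstacle here: the only place demanding care is the bookkeeping around $(\cdot)^{(q)}$ versus $\widehat{\ }$ — making sure that $\widehat{e^{(q)}}$ (rather than $(\widehat{e})^{(q)}$, which in fact coincides with it) is the idempotent generating $C^{\perp_h}$, and that all the Frobenius-fixedness hypotheses on $\alpha$ are genuinely used when invoking Theorem \ref{herm}.
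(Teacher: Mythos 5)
Your proposal is correct and follows essentially the same route as the paper's own argument: compute $\ann_r(Re)=(\overline{1}-e)R$, combine Theorems \ref{antiiso}, \ref{eucl} and \ref{herm} to get $C^{\perp_h}=R(\overline{1}-\widehat{e^{(q)}})$, and then conclude by the standard idempotent bookkeeping for two idempotents generating the same left ideal. The extra checks you flag (that $(\cdot)^{(q)}$ is a ring automorphism fixing $\alpha$, hence $\widehat{e^{(q)}}$ is idempotent, and that $\alpha=\alpha^{-1}$ guarantees the hypothesis $\alpha^q=\alpha$ of Theorem \ref{herm}) are exactly the points that need care, and you handle them correctly.
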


\section{Examples.}

We conclude with some examples of optimal codes in twisted skew group algebras.

\begin{ex} \label{hexa} Let  $G=D_{6} = \langle x,y \mid x^3 =y^2 = 1, x^y = x^{-1} \rangle$ be a dihedral group of order $6$.
Let $\F_4$ be the field with 4 elements and let $\omega$ be a primitive element of $\F_4$. Finally,  let $\theta$ be the non-trivial homomorphism from $D_6$ onto $\Aut(\F_4)$, i.e., $\Theta(t)$ is the Frobenius automorphism
of $\F_4$ for any involution $t\in G$ and trivial otherwise. 
We put
$$ e = \omega 1  + y +xy + \omega x^2y \in \F_4[G,\Theta,1].$$
 Then $e=e^2$ and the left ideal $C$ generated  by $ e$ in $\F_4[G,\Theta,1]$ is the $[6,3,4]$  quaternary hexacode ${\cal H}_6$. Since $C$ is generated by an idempotent, $C$ is a projective $\F_4[G,\Theta,1]$-left module.
Note that $\widehat{e^{(2)}} = e+1$.
Applying Proposition \ref{self-herm}, we see
that $C$ is Hermitian self-dual.
To our knowledge skew group codes have been considered so far only for cyclic groups.
The hexacode seems to be the first interesting example in which the group is not cyclic.

\end{ex}

\begin{rk} a) The hexacode $C={\cal H}_6$ can not be realized as a proper twisted group code: \\ Let $K=\F_4$.
Suppose that $C \leq K^\alpha G = K[G,1,\alpha]$ where  $G=D_6$ or $G=C_6$ and
$[\alpha] \not=[1]$. (Note that $D_6$ and $C_6$ are the only groups of order $6$.) If $G=D_6$, then
$[\alpha]=[1]$ since $|\HH^2(D_6,K^*)|=1$, a contradiction. Now suppose that $G=C_6$ and $C={\cal H}_6 \leq K^\alpha G$. We may assume that $\alpha = \alpha_\lambda$ as in (\ref{consta-cocycle}).
Note that $K^\alpha G$ has no irreducible module of dimension $1$, by \cite[Corollary 2.7]{w23}.
Thus $C$ is an irreducible $K^\alpha G$-module. Since $K^\alpha G$ is not semisimple by \cite[Lemma 12.2]{P}, we see that $C$ is the Jacobson radical $J(K^\alpha G)$.
If $y$ is an involution in $G$, then $(\lambda^2-y)^2=0$. Thus $\lambda^2-y$
generates a nilpotent ideal, hence
 $\lambda^2-y \in J(K^\alpha G) = C$, a contradiction, since the minimum distance of $C$ is $4$.
\\
b) The hexacode $C={\cal H}_6$ can not be realized as a  group code: \\
Let $C_6 = C_3 \times C_2 = \langle x \rangle \times \langle y \rangle$. The structure of $KG$ can be described as
$$ KG = P(V_0) \oplus P(V_1) \oplus P(V_2),$$
where $V_0= 1_G,V_1,V_2$ are the irreducible modules, all of dimension 1, and $\dim P(V_i)=2$ for all $i$. Again suppose that $C \leq KG$. If $C= V_0 \oplus V_1 \oplus V_2$, then $C$ we get a contradiction, since
$$V_i = \langle \underbrace{(1 + \lambda^ix +\lambda^{2i}x^2) +
(1 + \lambda^ix +\lambda^{2i}x^2)y}_{=v_i} \rangle$$
and $v_0 +v_1 + v_2$ has weight $2$. In the remaining case we have $P(V_i) \leq C$ for some $i$. Since all $P(V_i)$ are induced fom an ideal in $K\langle x \rangle$, the ideal $C$ contains a vector of weight $3$, a contradiction again.
Finally, we have to consider $G=D_6$. In this case we have $KG=P(1_G) \oplus V_1 \oplus V_2$, where $V_1 \cong V_2$ is irreducible of dimension $2$. As $\dim C = 3$, we see that $C$ is the direct sum of the trivial ideal and an ideal $V$ which is isomorphic to $V_1$.
Let $x$ be an element of order $3$, $y$ of order $2$, both in $G$, and $x^y = x^{-1}=x^2$. We put $N=\langle x \rangle$. Since $V$ is a projective $KN$-module and $K$ is a splitting field, we may write 
$V|_N = \langle v \rangle \oplus \langle w \rangle$ where $xv = \lambda v$ and $xw= \lambda^2 w$ and $1 \not= \lambda \in K^*$.
We write $v=v_1 + yv_2$ with $v_i \in KN$ and  may assume that the coefficient of $v_1$ is equal to $1$. 
Since $xv=\lambda v$, we get
$xv_1 = \lambda v_1$ and $xv_2 = \lambda^2 v$. Thus
$$ v= 1 + \lambda x +\lambda^2 x^2 + \mu y (1 + \lambda^2 x + \lambda x^2)$$ 
with some $\mu \in K^*$.
With $u= \mu^{-1}yv$
we get
$$ v+u = 1 + x + x^2 + (\mu +\mu^{-1})1 + (\mu\lambda^2 + \mu^{-1}\lambda)x + (\mu \lambda + \mu^{-1}\lambda^2)x^2.$$
If $\mu =1$, then $(v+u)K$ is the trivial module, a contradiction.
If $\mu = \lambda$, then $v+u$ has weight $5$, a contradiction. The same happens if $\mu = \lambda^2$, which finishes the proof.\\
c) Using group theoretic arguments, we see that $\PAut(\mathcal{H}_6)$ is a group of order 60 generated by the permutations $(1,2,6)(3,5,4)$ and $(1,2,3,4,5)$ (see for instance \cite[Example 1.7.8]{HP}). With {\sc Magma} we see that $\PAut(\mathcal{H}_6)$ does not contain a regular subgroup of order 6. Thus, according to \cite[Theorem 1.2]{BRS} and
\cite[Theorem 3,2]{CW21} the hexacode
$\mathcal{H}_6$ is neither a group code
nor a twisted group code.

\end{rk}

\begin{ex}  Let $G$ be the same group as in Example \ref{hexa} and let $\F_9$ be the field with 9 elements. We fix $\omega $  as a primitive element of $\F_9$.  Finally, let $\Theta(t)$ be again the Frobenius automorphism
of $\F_9$ for any involution $t\in G$ and trivial otherwise. With {\sc Magma}  we see that 
$$ e = \omega  +\omega^2 x+ \omega^2x^2+ x^2y \in \F_9[G,\Theta,1]$$ is a self-adjoint idempotent, i.e., $e=e^2=\hat{e}$. Therefore, by Proposition \ref{LCD-idempotent}, the left ideal $C=Re \leq R= \F_9[G,\Theta,1]$ is an Euclidean LCD skew code. Using {\sc Magma} we may check that $C$ is an MDS $[6,3,4]_9$ code with weight enumerator $W(C)=1+120x^4+240x^5+368x^6$. Note that the given construction of a $[6,3,4]_9$ code  is much simpler then
the one given in Grassl's list \cite{G}.
\end{ex}

\begin{rk} In \cite[Theorem 6.2]{W21} it is proved that Dickson's theorem holds for twisted group algebras, i.e., if $C=eK[G,1,\alpha]$ with $ 0 \not= e = e^2$ and $\cha K=p$, then $|G|_p \mid \dim C$.  As Example \ref{hexa} shows this does not hold true
in general for  skew twisted group algebras.

\end{rk}

\begin{ex}

Let $G=\Alt_4$ be the alternating group on $4$ letters, 
$K=\F_{27}=\F_3[\tau]$ with $\tau^3 + 2\tau + 1=0$, and $\Theta$ with kernel $H=C_2\times C_2\leq \textrm{Alt}(4)$. We choose 
$\alpha$ as in \cite[\S5]{CW21}. (Note that there is an obvious misprint in Table II of \cite{CW21}. We have $\alpha(za,xa^2)=-1$ and $\alpha(za,ya^2)=1$, as you can see directly from Table I.)

The following is a self-adjoint idempotent:
\begin{align*}
e=& \tau^6 {\rm Id} + \tau^{10}(1, 2)(3, 4) + \tau^{22}(1, 3, 2) + \tau^6(1, 4, 3) + \tau^{21}(2, 3, 4) + \tau^{10}(1, 2, 4) +\\
&\tau^{22}(1, 3, 4) + \tau^{23}(1, 4, 2) +
\tau^{24}(1, 2, 3) + \tau^6(1, 3)(2, 4) + \tau^{22}(1, 4)(2, 3).    
\end{align*}
Therefore by Proposition \ref{LCD-idempotent} it generates an Euclidean LCD code $C$ with parameters $[12,4,9]_{27}$ and has the following as a generator matrix
\[\gene(C)=\left[\begin{array}{cccccccccccc}
1&0&0&0&\tau^2&\tau^{17}&\tau^5&\tau^{19}&\tau^{23}&\tau^{10} & \tau^{12}&\tau^{25}\\
0&1&0&0&\tau^{11}&\tau^{12}&\tau^{15}&2&\tau^{24}&\tau^{21}&\tau^{25}&\tau^7\\
0&0&1&0&\tau^{14}&\tau^7&\tau^{14}&\tau^{10}&\tau^{12}&\tau^2&\tau^{18}&2\\
0&0&0&1&\tau&\tau&\tau^{21}&2&\tau^{12}&\tau^6&\tau^{16}&\tau^{25}\end{array}\right].\]
Note that $C$ and its dual are MDS codes. 
\end{ex}

\begin{ex} Let $G$ be the semidirect product
$G=C_7\rtimes C_3=\langle a\rangle \rtimes \langle b\rangle$ and let 
$K=\F_{8}=\F_2[\tau]$ with $\tau^3 + \tau + 1=0$. We choose $\Theta$ with kernel $H=C_7$ and 
$\alpha$ trivial. Then
\begin{align*}
c=&\tau^31 + \tau^6a + \tau a^2 + \tau^4a^3 + \tau a^4 + a^6 + \tau^4b + \tau b  a + \tau^6b  a^2 +\tau^3b  a^3 + \tau^6b  a^4 +\\
&\tau^3b  a^5 + \tau^3b  a^6 + \tau^6b^2 + \tau^2b^2  a +\tau^2b^2  a^2 + \tau^2b^2  a^4 + \tau^3b^2  a^5 + \tau^3b^2  a^6
\end{align*}
generates an LCD code $C$ with parameters $[21,14,6]_{8}$. The  dual has parameters $[21,7,12]_{8}$. Thus $C$ and its dual are optimal according to Grassl's list \cite{G}. A generator matrix of $C$ is
{\scriptsize \[\gene(C)=
\left[\begin{array}{ccccccccccccccccccccc}
1&0&0&0&0&0&0&0&0&0&0&0&0&\tau^5&0&\tau^3&0&\tau^2&\tau&\tau^4&\tau\\
0&1&0&0&0&0&0&0&0&0&0&0&0&\tau^3&0&\tau^4&\tau^6&\tau^2&1&1&\tau^3\\
0&0&1&0&0&0&0&0&0&0&0&0&0&\tau&0&\tau^3&\tau&1&\tau^6&\tau^2&\tau^3\\
0&0&0&1&0&0&0&0&0&0&0&0&0&\tau^6&0&0&\tau^4&\tau^4&\tau^4&\tau^4&\tau^4\\
0&0&0&0&1&0&0&0&0&0&0&0&0&\tau^4&0&\tau^4&1&\tau^5&0&\tau^3&\tau^2\\
0&0&0&0&0&1&0&0&0&0&0&0&0&\tau^2&0&1&\tau^4&1&\tau^5&0&1\\
0&0&0&0&0&0&1&0&0&0&0&0&0&1&0&\tau^5&\tau&\tau&0&\tau^2&0\\
0&0&0&0&0&0&0&1&0&0&0&0&0&\tau^6&0&\tau^6&\tau^6&0&\tau&\tau^4&1\\
0&0&0&0&0&0&0&0&1&0&0&0&0&\tau^5&0&\tau^2&0&1&1&0&\tau^4\\
0&0&0&0&0&0&0&0&0&1&0&0&0&\tau^4&0&0&\tau^4&\tau^2&\tau^6&\tau^6&\tau^4\\
0&0&0&0&0&0&0&0&0&0&1&0&0&\tau^3&0&\tau^4&\tau&\tau^4&\tau^3&1&\tau^5\\
0&0&0&0&0&0&0&0&0&0&0&1&0&\tau^2&0&\tau^6&\tau^4&1&0&\tau&\tau^3\\
0&0&0&0&0&0&0&0&0&0&0&0&1&\tau&0&\tau^4&\tau^4&0&\tau^4&\tau^6&\tau\\
0&0&0&0&0&0&0&0&0&0&0&0&0&0&1&\tau^4&\tau&\tau^5&\tau^2&\tau^6&\tau^3
\end{array}\right].\]}

\end{ex}

\begin{ex}
Let $G$ be the dihedral group of order $20$, $G=C_{10}\rtimes C_2=\langle a\rangle \rtimes \langle b\rangle$ and let 
$K=\F_{9}=\F_2[\tau]$ with $\tau^2 + 2\tau + 2=0$. We choose $\Theta$ with kernel $H=C_{10}$ and 
$\alpha$ as in Table \ref{tab:cocycle}. This cocycle, which is not a coboundery, has been determined thanks to the algorithm presented in \cite{EK}.

\begin{table}[]
    \centering
 {\scriptsize    \begin{tabular}{c|cccccccccccccccccccc}
&$1$&$b$&$a$&$ab$&$a^2$&$b$&$a^3$&$a^3b$&$a^4$&$a^4b$&$a^5$&$a^5b$&$a^6$&$a^6b$&$a^7$&$a^7b$&$a^8$&$a^8b$&$a^9$&$a^9b$\\
\hline
$1$&1&1&1 &1&1&1 &1&1&1 &1&1&1 &1&1&1 &1&1&1 &1&1\\
$b$&1&1&2&2&1&1&2&2&1&1&2&2&1&1&2&2&1&1&2&2\\
$a$&1&1&2&2&1&1&2&2&1&1&2&2&1&1&2&2&1&1&2&2\\
$ab$&1&1&1&1&1&1&1&1&1&1&1&1&1&1&1&1&1&1&1&1\\
$a^2$&1&1&1&1&1&1&1&1&1&1&1&1&1&1&1&1&1&1&1&1\\
$a^2b$&1&1&2&2&1&1&2&2&1&1&2&2&1&1&2&2&1&1&2&2\\
$a^3$&1&1&2&2&1&1&2&2&1&1&2&2&1&1&2&2&1&1&2&2\\
$a^3b$&1&1&1&1&1&1&1&1&1&1&1&1&1&1&1&1&1&1&1&1\\
$a^4$&1&1&1&1&1&1&1&1&1&1&1&1&1&1&1&1&1&1&1&1\\
$a^4b$&1&1&2&2&1&1&2&2&1&1&2&2&1&1&2&2&1&1&2&2\\
$a^5$&1&1&2&2&1&1&2&2&1&1&2&2&1&1&2&2&1&1&2&2\\
$a^5b$&1&1&1&1&1&1&1&1&1&1&1&1&1&1&1&1&1&1&1&1\\
$a^6$&1&1&1&1&1&1&1&1&1&1&1&1&1&1&1&1&1&1&1&1\\
$a^6b$&1&1&2&2&1&1&2&2&1&1&2&2&1&1&2&2&1&1&2&2\\
$a^7$&1&1&2&2&1&1&2&2&1&1&2&2&1&1&2&2&1&1&2&2\\
$a^7b$&1&1&1&1&1&1&1&1&1&1&1&1&1&1&1&1&1&1&1&1\\
$a^8$&1&1&1&1&1&1&1&1&1&1&1&1&1&1&1&1&1&1&1&1\\
$a^8b$&1&1&2&2&1&1&2&2&1&1&2&2&1&1&2&2&1&1&2&2\\
$a^9$&1&1&2&2&1&1&2&2&1&1&2&2&1&1&2&2&1&1&2&2\\
$a^9b$&1&1&1&1&1&1&1&1&1&1&1&1&1&1&1&1&1&1&1&1
    \end{tabular}}
    \caption{$\alpha\in \HH^2(D_{20},\F_9)$}
    \label{tab:cocycle}
 \rule{12cm}{0.5ex} 
\end{table}

Then 
\begin{align*}
 c   & =  \tau^7 + a+ \tau a^2+ \tau^2 a^3+ \tau^5 a^5+ \tau a^6+ 2 a^7+ \tau^3 a^8+ 2 a^9\\
    & +(\tau^6 + \tau a+ \tau a^2+ \tau^6 a^3+ \tau^5 a^4 + \tau^2 a^5+ \tau^7  a^6+ \tau^5 a^7+ \tau^5 a^8+ \tau^3 a^9)b
\end{align*}
generates an LCD code $C$ with parameters $[20,16,4]_9$. The dual has parameters $[20,4,15]_9$. Thus $C$ and its dual are optimal according to Grassl's list \cite{G}. A generator matrix of $C$ is  {\scriptsize  
\[\gene(C)=
\left[\begin{array}{cccccccccccccccccccc}
1&0&0&0&0&0&0&0&0&0&0&0&0&0&0&1&0&2&0&\tau\\
0&1&0&0&0&0&0&0&0&0&0&0&0&0&0&\tau^3&0&\tau&\tau^3&\tau^2\\
0&0&1&0&0&0&0&0&0&0&0&0&0&0&0&\tau^7&0&\tau^5&\tau&\tau^7\\
0&0&0&1&0&0&0&0&0&0&0&0&0&0&0&2&0&1&2&2\\
0&0&0&0&1&0&0&0&0&0&0&0&0&0&0&2&0&\tau^6&\tau^5&0\\
0&0&0&0&0&1&0&0&0&0&0&0&0&0&0&\tau^7&0&1&\tau^3&\tau^3\\
0&0&0&0&0&0&1&0&0&0&0&0&0&0&0&1&0&0&\tau&2\\
0&0&0&0&0&0&0&1&0&0&0&0&0&0&0&\tau^5&0&1&0&\tau^2\\
0&0&0&0&0&0&0&0&1&0&0&0&0&0&0&\tau^5&0&2&\tau&2\\
0&0&0&0&0&0&0&0&0&1&0&0&0&0&0&1&0&\tau^7&\tau^2&\tau\\
0&0&0&0&0&0&0&0&0&0&1&0&0&0&0&1&0&2&\tau^3&\tau^7\\
0&0&0&0&0&0&0&0&0&0&0&1&0&0&0&\tau^3&0&\tau&\tau&\tau^5\\
0&0&0&0&0&0&0&0&0&0&0&0&1&0&0&2&0&\tau^6&1&\tau^6\\
0&0&0&0&0&0&0&0&0&0&0&0&0&1&0&\tau&0&\tau^2&\tau^7&\tau^6\\
0&0&0&0&0&0&0&0&0&0&0&0&0&0&1&\tau&0&2&2&\tau\\
0&0&0&0&0&0&0&0&0&0&0&0&0&0&0&0&1&\tau^3&\tau^7&2\\
\end{array}\right].\]}
\end{ex}

\begin{rk}
We found the above optimal codes by extensive search in {\sc Magma}. It would be extremely interesting to find a method to define twisted skew $G$-codes with prescribed minimum distance, as in the classical case of BCH codes. Note that this has been already done for particular $G$-codes in \cite{B,BJ}. Moreover, note that the product between minimum distance and dimension of these examples is far away from the lower bound given in Theorem \ref{bound}.
\end{rk}

\section*{Acknowledgements}
The first author is grateful for the support of the Israel Science Foundation (grant no. 353/21). The second author was partially supported by the ANR-21-CE39-0009 - BARRACUDA (French
\emph{Agence Nationale de la Recherche}).

\end{document}